\newenvironment{proof}{\noindent\textbf{Proof}}{\hfill\qed}
\newcommand{\qed}{\hfill$\Box$}
\newtheorem{lemma}{Lemma}
\newtheorem{theorem}{Theorem}
\newtheorem{definition}{Definition}
\begin{document}

\title{On Byzantine Containment Properties of the $min+1$ Protocol}

\author{Swan Dubois\protect\footnote{Universit\'e Pierre et Marie Curie \& INRIA, France, swan.dubois@lip6.fr} \footnote{Contact author, Telephone: 33 1 44 27 87 67, Postal address: LIP6, Case 26/00-225, Campus Jussieu, 4 place Jussieu, 75252 Paris Cedex 5, France} \and Toshimitsu Masuzawa\protect\footnote{Osaka University, Japan, masuzawa@ist.osaka-u.ac.jp} \and S\'{e}bastien Tixeuil\protect\footnote{Universit\'e Pierre et Marie Curie \& INRIA, France, sebastien.tixeuil@lip6.fr}}
\date{}

\maketitle

\begin{abstract}
Self-stabilization is a versatile approach to fault-tolerance since it permits a distributed system to recover from any transient fault that arbitrarily corrupts the contents of all memories in the system. Byzantine tolerance is an attractive feature of distributed systems that permits to cope with arbitrary malicious behaviors. 

We consider the well known problem of constructing a breadth-first spanning tree in this context. Combining these two properties proves difficult: we demonstrate that it is impossible to contain the impact of Byzantine nodes in a strictly or strongly stabilizing manner. We then adopt the weaker scheme of \emph{topology-aware strict stabilization} and we present a similar weakening of strong stabilization. We prove that the classical $min+1$ protocol has optimal Byzantine containment properties with respect to these criteria.
\end{abstract}

\paragraph{Keywords}
Byzantine fault, Distributed protocol, Fault tolerance,
Stabilization, Spanning tree construction

\section{Introduction}

The advent of ubiquitous large-scale distributed systems advocates that tolerance to various kinds of faults and hazards must be included from the very early design of such systems. \emph{Self-stabilization}~\cite{D74j,D00b,T09bc} is a versatile technique that permits forward recovery from any kind of \emph{transient} faults, while \emph{Byzantine Fault-tolerance}~\cite{LSP82j} is traditionally used to mask the effect of a limited number of \emph{malicious} faults. Making distributed systems tolerant to both transient and malicious faults is appealing yet proved difficult~\cite{DW04j,DD05c,NA02c} as impossibility results are expected in many cases.

Two main paths have been followed to study the impact of Byzantine faults in the context of self-stabilization:
\begin{itemize}
\item \emph{Byzantine fault masking.} In completely connected synchronous systems, one of the most studied problems in the context of self-stabilization with Byzantine faults is that of \emph{clock synchronization}. In~\cite{BDH08c,DW04j}, probabilistic self-stabilizing protocols were proposed for up to one third of Byzantine processes, while in \cite{DH07cb,HDD06c} deterministic solutions tolerate up to one fourth and one third of Byzantine processes, respectively.
\item \emph{Byzantine containment.} For \emph{local} tasks (\emph{i.e.} tasks whose correctness can be checked locally, such as vertex coloring, link coloring, or dining philosophers), the notion of \emph{strict stabilization} was proposed~\cite{NA02c,SOM05c,MT07j}. Strict stabilization guarantees that there exists a \emph{containment radius} outside which the effect of permanent faults is masked, provided that the problem specification makes it possible to break the causality chain that is caused by the faults. As many problems are not local, it turns out that it is impossible to provide strict stabilization for those.
\end{itemize}

\noindent\textbf{Our Contribution.} In this paper, we investigate the possibility of Byzantine containment in a self-stabilizing setting for tasks that are global (\emph{i.e.} there exists a causality chain of size $r$, where $r$ depends on $n$ the size of the network), and focus on a global problem, namely breadth-first spanning tree construction. A good survey on self-stabilizing solutions to this problem can be found in \cite{G03r}. In particular, one of the simplest solution is known under the name of $min+1$ protocol (see \cite{HC92j}). This name is due to the construction of the protocol itself. Each process has two variables: one pointer to its parent in the tree and one level in this tree. The protocol is reduced to the following rule: each process chooses as its parent the neighbor which has the smallest level ($min$ part) and updates its level in consequence ($+1$ part). \cite{HC92j} proves that this protocol is self-stabilizing. In this paper, we propose a complete study of Byzantine containment properties of this protocol.

In a first time, we study space Byzantine containment properties of this protocol. As strict stabilization is impossible with such global tasks (see \cite{NA02c}), we use the weaker scheme of \emph{topology-aware strict stabilization} (see \cite{DMT10ra}). In this scheme, we weaken the containment constraint by relaxing the notion of containment radius to containment area, that is Byzantine processes may disturb infinitely often a set of processes which depends on the topology of the system and on the location of Byzantine processes. We show that the $min+1$ protocol has optimal containment area with respect to topology-aware strict stabilization.

In a second time, we study time Byzantine containment properties of this protocol using the concept of \emph{strong stabilization} (see \cite{MT06cb,DMT10rb}). We first show that it is impossible to find a strongly stabilizing solution to the BFS tree construction problem. It is why we weaken the concept of strong stabilization using the notion of containment area to obtain \emph{topology-aware strong stabilization}. We show then that the $min+1$ protocol has also optimal containment area with respect to topology-aware strong stabilization.

\section{Distributed System}

A \emph{distributed system} $S=(P,L)$ consists of a set
$P=\{v_1,v_2,\ldots,v_n\}$ of processes and a set $L$ of
bidirectional communication links (simply called links).
A link is an unordered pair of distinct processes.
A distributed system $S$ can be regarded as a graph whose vertex set is $P$
and whose link set is $L$, so we use graph terminology to describe a
distributed system $S$. We use the following notations: $n=|P|$ and $m=|L|$.

Processes $u$ and $v$ are called \emph{neighbors} if $(u,v)\in L$.
The set of neighbors of a process $v$ is denoted by $N_v$, and its
cardinality (the \emph{degree} of $v$) is denoted by $\Delta_v (=|N_v|)$.
The degree $\Delta$ of a distributed system $S=(P,L)$ is defined as
$\Delta = \max \{\Delta_v\ |\ v \in P\}$.
We do not assume existence of a unique identifier for each process.
Instead we assume each process can distinguish its neighbors from each other
by locally arranging them in some arbitrary order:
the $k$-th neighbor of a process $v$ is denoted by
$N_v(k)\ (1 \le k \le \Delta_v)$.

In this paper, we consider distributed systems of arbitrary topology.
We assume that a single process is distinguished as a \emph{root},
and all the other processes are identical.

We adopt the \emph{shared state model} as a communication model
in this paper, where each process can directly read the states
of its neighbors.

The variables that are maintained by processes denote process states.
A process may take actions during the execution of the system. An
action is simply a function that is executed in an atomic manner
by the process.
The actions executed by each process is described by a finite set
of guarded actions of the form
$\langle$guard$\rangle\longrightarrow\langle$statement$\rangle$.
Each guard of process $u$ is a boolean expression involving
the variables of $u$ and its neighbors.

A global state of a distributed system is called a \emph{configuration}
and is specified by a product of states of all processes.
We define $C$ to be the set of all possible configurations
of a distributed system $S$.
For a process set $R \subseteq P$ and two configurations $\rho$ and $\rho'$,
we denote $\rho \stackrel{R}{\mapsto} \rho'$
when $\rho$ changes to $\rho'$ by executing an action of each process
in $R$ simultaneously.
Notice that $\rho$ and $\rho'$ can be different only in
the states of processes in $R$.
For completeness of execution semantics, we should clarify
the configuration resulting from simultaneous actions of
neighboring processes.
The action of a process depends only on its state
at $\rho$ and the states of its neighbors at $\rho$,
and the result of the action reflects on the state of the process
at $\rho '$.

We say that a process is \emph{enabled} in a configuration $\rho$ if the guard of at least one of its actions is evaluated as true in $\rho$.

A \emph{schedule} of a distributed system is an infinite sequence of
process sets.  Let $Q=R^1, R^2, \ldots$  be a schedule,
where $R^i \subseteq P$ holds for each $i\ (i \ge 1)$.
An infinite sequence of configurations
$e=\rho_0,\rho_1,\ldots$ is called an \emph{execution} from
an initial configuration $\rho_0$ by a schedule $Q$,
if $e$ satisfies $\rho_{i-1} \stackrel{R^i}{\mapsto} \rho_i$
for each $i\ (i \ge 1)$.
Process actions are executed atomically, and we also assume
that a \emph{distributed daemon} schedules the actions of processes,
\emph{i.e.} any subset of processes can simultaneously execute
their actions. We say that the daemon is \emph{central} if it schedules action of only one process at any step. 

The set of all possible executions from
$\rho_0\in C$ is denoted by $E_{\rho_0}$.
The set of all possible executions is denoted by $E$, that is,
$E=\bigcup_{\rho\in C}E_{\rho}$.
We consider \emph{asynchronous} distributed systems
where we can make no assumption
on schedules except that any schedule is \emph{fair}:
a process which is infinitely often enabled in an execution can not be never activated in this execution.

In this paper, we consider (permanent) \emph{Byzantine faults}:
a Byzantine process (\emph{i.e.} a Byzantine-faulty process)
can make arbitrary behavior independently from its actions.
If $v$ is a Byzantine process,
$v$ can repeatedly change its variables arbitrarily.


\section{Self-Stabilizing Protocol Resilient to Byzantine Faults}

Problems considered in this paper are so-called \emph{static problems}, 
\emph{i.e.} they require the system to find static solutions.
For example, the spanning-tree construction problem is a static problem,
while the mutual exclusion problem is not.
Some static problems can be defined by a \emph{specification predicate}
(shortly, specification), $spec(v)$, for each process $v$:
a configuration is a desired one (with a solution) if 
every process satisfies $spec(v)$.
A specification $spec(v)$ is a boolean expression
on variables of $P_v~(\subseteq P)$ where $P_v$ is the set of processes
whose variables appear in $spec(v)$.
The variables appearing in the specification are
called \emph{output variables} (shortly, \emph{O-variables}).
In what follows, we consider a static problem defined by
specification $spec(v)$.

A \emph{self-stabilizing protocol} (\cite{D74j}) is a protocol
that eventually reaches a \emph{legitimate configuration},
where $spec(v)$ holds at every process $v$, regardless of the initial configuration.
Once it reaches a legitimate configuration, every process never
changes its O-variables and always satisfies $spec(v)$.
 From this definition, a self-stabilizing protocol is expected to tolerate 
any number and any type of transient faults since it can eventually 
recover from any configuration affected by the transient faults.
However, the recovery from any configuration is guaranteed
only when every process correctly executes its action from 
the configuration, \emph{i.e.}, we do not consider existence of
permanently faulty processes.

\subsection{Strict stabilization}\label{sub:strict}

When (permanent) Byzantine 
processes exist, Byzantine processes may not satisfy $spec(v)$.
In addition, correct processes near the Byzantine processes
can be influenced and may be unable to satisfy $spec(v)$.
Nesterenko and Arora~\cite{NA02c} define
a \emph{strictly stabilizing protocol} as a self-stabilizing protocol 
resilient to unbounded number of Byzantine processes.

Given an integer $c$, a \emph{$c$-correct process} is a process 
 defined as follows.

\begin{definition}[$c$-correct process]
A process is $c$-correct if it is correct (\emph{i.e.} not Byzantine) and located at distance more than $c$ from any Byzantine process.
\end{definition}

\begin{definition}[$(c,f)$-containment]
\label{def:cfcontained}
A configuration $\rho$ is \emph{$(c,f)$-contained} for specification
$spec$ if, given at most $f$ Byzantine processes, in any execution
starting from $\rho$, every $c$-correct process $v$ always satisfies $spec(v)$ and never changes
its O-variables.
\end{definition}

The parameter $c$ of Definition~\ref{def:cfcontained} refers to 
the \emph{containment radius} defined in \cite{NA02c}. 
The parameter $f$ refers explicitly to the number of Byzantine processes, 
while \cite{NA02c} dealt with unbounded number of Byzantine faults 
(that is $f\in\{0\ldots n\}$).

\begin{definition}[$(c,f)$-strict stabilization]
\label{def:cfstabilizing}
A protocol is \emph{$(c,f)$-strictly stabilizing} for specification
$spec$ if, given at most $f$ Byzantine processes, any execution
$e=\rho_0,\rho_1,\ldots$ contains a configuration $\rho_i$ that
is $(c,f)$-contained for $spec$.
\end{definition}

An important limitation of the model of \cite{NA02c}
is the notion of $r$-\emph{restrictive} specifications. 
Intuitively, a specification is $r$-restrictive if it prevents 
combinations of states that belong to two processes $u$ and $v$ 
that are at least $r$ hops away. 
An important consequence related to Byzantine tolerance is that 
the containment radius of protocols solving those specifications is 
at least $r$. 
For some problems, such as the breadth-first search (BFS) spanning tree construction we consider
in this paper, $r$ can not be bounded by a constant.
In consequence, we can show that there exists no $(c,1)$-strictly stabilizing
protocol for the breadth-first search (BFS) spanning tree construction for any (finite) integer $c$.

\subsection{Strong stabilization}

To circumvent the impossibility result, \cite{MT06cb} defines a weaker notion 
than the strict stabilization. Here, the requirement to the containment 
radius is relaxed, \emph{i.e.} there may exist processes outside the 
containment radius that invalidate the specification predicate, 
due to Byzantine actions. However, the impact of Byzantine 
triggered action is limited in times: the set of Byzantine processes 
may only impact processes outside the containment radius 
a bounded number of times, even if Byzantine processes execute 
an infinite number of actions.

In the following of this section, we recall the formal definition of strong stabilization adopted in \cite{DMT10rb}.
From the states of $c$-correct processes, \emph{$c$-legitimate configurations}
and \emph{$c$-stable configurations} are defined as follows.

\begin{definition}[$c$-legitimate configuration]
A configuration $\rho$ is $c$-legitimate for \emph{spec} if every $c$-correct process $v$ satisfies $spec(v)$.
\end{definition}

\begin{definition}[$c$-stable configuration]
A configuration $\rho$ is $c$-stable if every $c$-correct process never changes the values of its O-variables as long as Byzantine processes make no action.
\end{definition}

Roughly speaking, the aim of self-stabilization is to guarantee that a distributed system
eventually reaches a $c$-legitimate and $c$-stable configuration.
However, a self-stabilizing system can be disturbed by Byzantine processes
after reaching a $c$-legitimate and $c$-stable configuration.
The \emph{$c$-disruption} represents the period where $c$-correct processes
are disturbed by Byzantine processes and is defined as follows 

\begin{definition}[$c$-disruption]
A portion of execution $e=\rho_0,\rho_1,\ldots,\rho_t$ ($t>1$) is a $c$-disruption if and only if the following holds:
\begin{enumerate}
\item $e$ is finite,
\item $e$ contains at least one action of a $c$-correct process for changing the value of an O-variable,
\item $\rho_0$ is $c$-legitimate for \emph{spec} and $c$-stable, and
\item $\rho_t$ is the first configuration after $\rho_0$ such that $\rho_t$ is $c$-legitimate for \emph{spec} and $c$-stable.
\end{enumerate}
\end{definition}

Now we can define a self-stabilizing protocol such that Byzantine processes 
may only impact processes outside the containment radius 
a bounded number of times, even if Byzantine processes execute 
an infinite number of actions.

\begin{definition}[$(t,k,c,f)$-time contained configuration]
A configuration $\rho_0$ is $(t,k,c,f)$-time contained for \emph{spec} if given at most $f$ Byzantine processes, the following properties are satisfied:
\begin{enumerate}
\item $\rho_0$ is $c$-legitimate for \emph{spec} and $c$-stable,
\item every execution starting from $\rho_0$ contains a $c$-legitimate configuration for \emph{spec} after which the values of all the O-variables of $c$-correct processes remain unchanged (even when Byzantine processes make actions repeatedly and forever), 
\item every execution starting from $\rho_0$ contains at most $t$ $c$-disruptions, and 
\item every execution starting from $\rho_0$ contains at most $k$ actions of changing the values of O-variables for each $c$-correct process.
\end{enumerate}
\end{definition}

\begin{definition}[$(t,c,f)$-strongly stabilizing protocol]
A protocol $A$ is $(t,c,f)$-strongly stabilizing if and only if starting from any arbitrary configuration, every execution involving at most $f$ Byzantine processes contains a $(t,k,c,f)$-time contained configuration that is reached after at most $l$ rounds. Parameters $l$ and $k$ are respectively the $(t,c,f)$-stabilization time and the $(t,c,f)$-process-disruption times of $A$.
\end{definition}

Note that a $(t,k,c,f)$-time contained configuration is 
a $(c,f)$-contained configuration when $t=k=0$, and thus,
$(t,k,c,f)$-time contained configuration is a generalization 
(relaxation) of a $(c,f)$-contained configuration.  
Thus, a strongly stabilizing protocol is weaker than a strictly stabilizing one
(as processes outside the containment radius may take
incorrect actions due to Byzantine influence).
However, a strongly stabilizing protocol is stronger than
a classical self-stabilizing one (that may never meet their specification
in the presence of Byzantine processes).

The parameters $t$, $k$ and $c$ are introduced to quantify the strength of
fault containment, we do not require each process to know the values of the parameters.

\section{Topology-aware Byzantine resilience}

\subsection{Topology-aware strict stabilization}

In Section \ref{sub:strict}, we saw that there exist a number of impossibility results on strict stabilization due to the notion of $r$-restrictives specifications. To circumvent this impossibility result, we describe here a weaker notion than the strict stabilization: the \emph{topology-aware strict stabilization} (denoted by TA strict stabilization for short) introduced by \cite{DMT10ra}. Here, the requirement to the containment radius is relaxed, \emph{i.e.} the set of processes which may be disturbed by Byzantine ones is not reduced to the union of $c$-neighborhood of Byzantine processes but can be defined depending on the graph topology and Byzantine processes location.

In the following, we give formal definition of this new kind of Byzantine containment. From now, $B$ denotes the set of Byzantine processes and $S_B$ (which is function of $B$) denotes a subset of $V$ (intuitively, this set gathers all processes which may be disturbed by Byzantine processes).

\begin{definition}[$S_{B}$-correct node]
A node is \emph{$S_{B}$-correct} if it is a correct node (\emph{i.e.} not Byzantine) which not belongs to $S_{B}$.
\end{definition}

\begin{definition}[$S_{B}$-legitimate configuration]
A configuration $\rho$ is \emph{$S_{B}$-legitimate} for $spec$ if every $S_{B}$-correct node $v$ is legitimate for $spec$ (\emph{i.e.} if $spec(v)$ holds).
\end{definition}

\begin{definition}[$(S_{B},f)$-topology-aware containment]
\label{def:SfTAcontained}
A configuration $\rho_{0}$ is \emph{$(S_{B},f)$-topology-aware contained} for specification $spec$ if, given at most $f$ Byzantine processes, in any execution $e=\rho_0,\rho_1,\ldots$, every configuration is $S_{B}$-legitimate and every $S_B$-correct process never changes its O-variables. 
\end{definition}

The parameter $S_{B}$ of Definition~\ref{def:SfTAcontained} refers to the \emph{containment area}. Any process which belongs to this set may be infinitely disturbed by Byzantine processes. The parameter $f$ refers explicitly to the number of Byzantine processes.

\begin{definition}[$(S_{B},f)$-topology-aware strict stabilization]
\label{def:SfTAStrictstabilizing}
A protocol is \emph{$(S_{B},f)$-topology-aware strictly stabilizing} for specification $spec$ if, given at most $f$ Byzantine processes, any execution $e=\rho_0,\rho_1,\ldots$ contains a configuration $\rho_i$ that is $(S_{B},f)$-topology-aware contained for $spec$.
\end{definition}

Note that, if $B$ denotes the set of Byzantine processes and $S_{B}=\left\{v\in V|\underset{b\in B}{min}\left(d(v,b)\right)\leq c\right\}$, then a $(S_{B},f)$-topology-aware strictly stabilizing protocol is a $(c,f)$-strictly stabilizing protocol. Then, the concept of topology-aware strict stabilization is a generalization of the strict stabilization. However, note that a TA strictly stabilizing protocol is stronger than a classical self-stabilizing protocol (that may never meet their specification in the presence of Byzantine processes).

The parameter $S_{B}$ is introduced to quantify the strength of fault containment, we do not require each process to know the actual definition of the set. Actually, the protocol proposed in this paper assumes no knowledge on the parameter.

\subsection{Topology-aware strong stabilization}

Similarly to topology-aware strict stabilization, we can weaken the notion of strong stabilization using the notion of containment area. Then, we obtain the following definition:

\begin{definition}[$S_B$-stable configuration]
A configuration $\rho$ is $S_B$-stable if every $S_B$-correct process never changes the values of its O-variables as long as Byzantine processes make no action.
\end{definition}

\begin{definition}[$S_{B}$-TA-disruption]
A portion of execution $e=\rho_0,\rho_1,\ldots,\rho_t$ ($t>1$) is a $S_{B}$-TA-disruption if and only if the followings hold:
\begin{enumerate}
\item $e$ is finite,
\item $e$ contains at least one action of a $S_{B}$-correct process for changing the value of an O-variable,
\item $\rho_0$ is $S_{B}$-legitimate for $spec$ and $S_B$-stable, and
\item $\rho_t$ is the first configuration after $\rho_0$ such that $\rho_t$ is $S_{B}$-legitimate for $spec$ and $S_B$-stable.
\end{enumerate}
\end{definition}

\begin{definition}[$(t,k,S_{B},f)$-TA time contained configuration]
A configuration $\rho_0$ is $(t,k,S_{B},$ $f)$-TA time contained for \emph{spec} if given at most $f$ Byzantine processes, the following properties are satisfied:
\begin{enumerate}
\item $\rho_0$ is $S_{B}$-legitimate for \emph{spec} and $S_B$-stable,
\item every execution starting from $\rho_0$ contains a $S_B$-legitimate configuration for \emph{spec} after which the values of all the O-variables of $S_B$-correct processes remain unchanged (even when Byzantine processes make actions repeatedly and forever), 
\item every execution starting from $\rho_0$ contains at most $t$ $S_B$-TA-disruptions, and 
\item every execution starting from $\rho_0$ contains at most $k$ actions of changing the values of O-variables for each $S_B$-correct process.
\end{enumerate}
\end{definition}

\begin{definition}[$(t,S_{B},f)$-TA strongly stabilizing protocol]
A protocol $A$ is $(t,S_{B},f)$-TA\\ strongly stabilizing if and only if starting from any arbitrary configuration, every execution involving at most $f$ Byzantine processes contains a $(t,k,S_{B},f)$-TA-time contained configuration that is reached after at most $l$ actions of each $S_{B}$-correct node. Moreover, $S_{B}$-legitimate configurations are closed by actions of $A$. Parameters $l$ and $k$ are respectively the $(t,S_{B},f)$-stabilization time and the $(t,S_{B},f)$-process-disruption time of $A$.
\end{definition}

\section{BFS Spanning Tree Construction}

In this section, we are interested in the problem of BFS spanning tree construction. That is, the system has a distinguished process called the root (and denoted by $r$) and we want to obtain a BFS spanning tree rooted to this root. We made the following hypothesis: the root $r$ is never Byzantine.

To solve this problem, each process $v$ has two O-variables: the first is $prnt_v\in N_v\cup\{\bot\}$ which is a pointer to the neighbor that is designated to be the parent of $v$ in the BFS tree and the second is $level_v\in\{0,\ldots,D\}$ which stores the depth (the number of hops from the root) of $v$ in this tree. Obviously, Byzantine process may disturb (at least) their neighbors. For example, a Byzantine process may act as the root. It is why the specification of the BFS tree construction we adopted states in fact that there exists a BFS spanning forest such that any root of this forest is either the real root of the system or a Byzantine process. More formally, we use the following specification of the problem.

\begin{definition}[BFS path]
A path $(v_0,\ldots,v_k)$ ($k\geq 1$) of $S$ is a \emph{BFS path} if and only if:
\begin{enumerate}
\item $prnt_{v_0}=\bot$, $level_{v_0}=0$, and $v_0\in B\cup\{r\}$,
\item $\forall i\in\{1,\ldots,k\}, prnt_{v_i}=v_{i-1}$ and $level_{v_i}=level_{v_{i-1}}+1$, and
\item $\forall i\in\{1,\ldots,k\}, level_{v_{i-1}}=\underset{u\in N_{v_i}}{min}\{level_{u}\}$.
\end{enumerate}
\end{definition}

We define the specification predicate $spec(v)$ of the BFS spanning tree construction as follows.
\[spec(v) : \begin{cases}
 prnt_v = \bot \text{ and } level_v = 0 \text{ if } v \text{ is the root } r \\
 \text{there exists a BFS path } (v_0,\ldots,v_k) \text{ such that } v_k=v \text{ otherwise}
\end{cases}\]

In the case where any process is correct, note that $spec$ implies the existence of a BFS spanning tree rooted to the real root. The well-known $min+1$ protocol solves this problem in a self-stabilizing way (see \cite{HC92j}). In the following of this section, we assume that some processes may be Byzantine and we study the Byzantine containment properties of this protocol. We show that this self-stabilizing protocol has moreover optimal Byzantine containment properties.

In more details, we prove first that there exists neither strictly nor strongly stabilizing solution to the BFS spanning tree construction (see Theorems \ref{th:impStricte} and \ref{th:impStrong}). Then, we demonstrate in Theorems \ref{th:possTAStricte} and \ref{th:possTAStrong} that the $min+1$ protocol is both $(S_{B},f)$-TA strictly and $(t,S_{B}^*,f)$-TA strongly stabilizing where $f\leq n-1$, $t=2m$, and 
\[\begin{array}{ccc}
S_{B}&=&\left\{v\in V\left|\underset{b\in B}{min}\left(d(v,b)\right)\leq d(r,v)\right.\right\}\\
S_{B}^*&=&\left\{v\in V\left|\underset{b\in B}{min}\left(d(v,b)\right)<d(r,v)\right.\right\}
\end{array}\]
Figure \ref{fig:ExBFS} provides an example of these containment areas. Finally, we show that these containment areas are in fact optimal (see Theorem \ref{th:impTAStricte} and \ref{th:impTAStrong}).

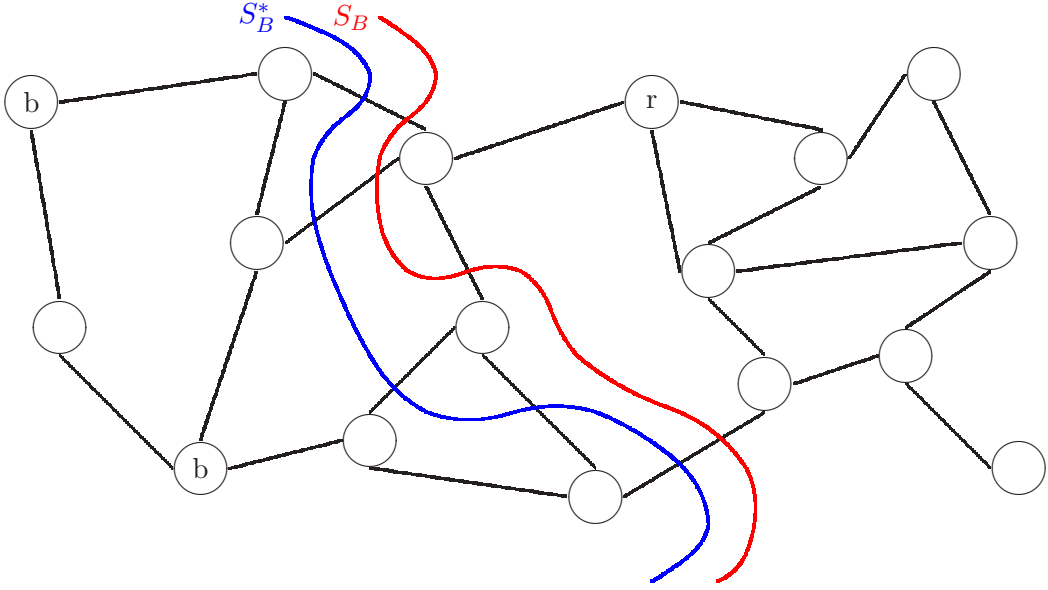
\begin{figure}[t]
\noindent \begin{centering} \ifx\JPicScale\undefined\def\JPicScale{0.75}\fi
\unitlength \JPicScale mm
\begin{picture}(185,100)(0,0)
\linethickness{0.3mm}
\put(5,85){\circle{10}}

\linethickness{0.3mm}
\put(50,90){\circle{10}}

\linethickness{0.3mm}
\put(10,45){\circle{10}}

\linethickness{0.3mm}
\put(75,75){\circle{10}}

\linethickness{0.3mm}
\put(115,85){\circle{10}}

\linethickness{0.3mm}
\put(145,75){\circle{10}}

\linethickness{0.3mm}
\put(125,55){\circle{10}}

\linethickness{0.3mm}
\put(135,35){\circle{10}}

\linethickness{0.3mm}
\put(105,15){\circle{10}}

\linethickness{0.3mm}
\put(85,45){\circle{10}}

\linethickness{0.3mm}
\put(65,25){\circle{10}}

\linethickness{0.3mm}
\put(35,20){\circle{10}}

\linethickness{0.3mm}
\put(45,60){\circle{10}}

\linethickness{0.3mm}
\multiput(10,85)(0.83,0.12){42}{\line(1,0){0.83}}
\linethickness{0.3mm}
\multiput(55,90)(0.24,-0.12){83}{\line(1,0){0.24}}
\linethickness{0.3mm}
\multiput(80,75)(0.36,0.12){83}{\line(1,0){0.36}}
\linethickness{0.3mm}
\multiput(120,85)(0.6,-0.12){42}{\line(1,0){0.6}}
\linethickness{0.3mm}
\multiput(125,60)(0.24,0.12){83}{\line(1,0){0.24}}
\linethickness{0.3mm}
\multiput(115,80)(0.12,-0.6){42}{\line(0,-1){0.6}}
\linethickness{0.3mm}
\multiput(125,50)(0.12,-0.12){83}{\line(1,0){0.12}}
\linethickness{0.3mm}
\multiput(110,15)(0.2,0.12){125}{\line(1,0){0.2}}
\linethickness{0.3mm}
\multiput(85,40)(0.12,-0.12){167}{\line(1,0){0.12}}
\linethickness{0.3mm}
\multiput(65,20)(0.83,-0.12){42}{\line(1,0){0.83}}
\linethickness{0.3mm}
\multiput(65,30)(0.12,0.12){125}{\line(1,0){0.12}}
\linethickness{0.3mm}
\multiput(75,70)(0.12,-0.24){83}{\line(0,-1){0.24}}
\linethickness{0.3mm}
\multiput(50,60)(0.16,0.12){125}{\line(1,0){0.16}}
\linethickness{0.3mm}
\multiput(45,65)(0.12,0.48){42}{\line(0,1){0.48}}
\linethickness{0.3mm}
\multiput(5,80)(0.12,-0.71){42}{\line(0,-1){0.71}}
\linethickness{0.3mm}
\multiput(10,40)(0.12,-0.12){167}{\line(1,0){0.12}}
\linethickness{0.3mm}
\multiput(40,20)(0.48,0.12){42}{\line(1,0){0.48}}
\linethickness{0.3mm}
\multiput(35,25)(0.12,0.36){83}{\line(0,1){0.36}}
\put(5,85){\makebox(0,0)[cc]{b}}

\put(115,85){\makebox(0,0)[cc]{r}}

\put(35,20){\makebox(0,0)[cc]{b}}

\linethickness{0.3mm}
\put(165,90){\circle{10}}

\linethickness{0.3mm}
\put(175,60){\circle{10}}

\linethickness{0.3mm}
\put(160,40){\circle{10}}

\linethickness{0.3mm}
\put(180,20){\circle{10}}

\linethickness{0.3mm}
\multiput(150,75)(0.12,0.18){83}{\line(0,1){0.18}}
\linethickness{0.3mm}
\multiput(165,85)(0.12,-0.24){83}{\line(0,-1){0.24}}
\linethickness{0.3mm}
\multiput(160,45)(0.18,0.12){83}{\line(1,0){0.18}}
\linethickness{0.3mm}
\multiput(140,35)(0.36,0.12){42}{\line(1,0){0.36}}
\linethickness{0.3mm}
\multiput(160,35)(0.12,-0.12){125}{\line(1,0){0.12}}
\linethickness{0.3mm}
\multiput(130,55)(0.95,0.12){42}{\line(1,0){0.95}}

\textcolor{blue}{
\linethickness{0.3mm}
\qbezier(50,100)(50.83,99.93)(57.06,97.21)
\qbezier(57.06,97.21)(63.29,94.49)(65,90)
\qbezier(65,90)(65.26,85.76)(60.89,82.56)
\qbezier(60.89,82.56)(56.52,79.36)(55,75)
\qbezier(55,75)(53.87,68.57)(55.52,62.29)
\qbezier(55.52,62.29)(57.18,56.02)(60,50)
\qbezier(60,50)(62.53,43.95)(65.96,38.48)
\qbezier(65.96,38.48)(69.38,33.02)(75,30)
\qbezier(75,30)(81.99,27.36)(89.88,29.76)
\qbezier(89.88,29.76)(97.76,32.15)(105,30)
\qbezier(105,30)(112.12,27.2)(118.35,22.15)
\qbezier(118.35,22.15)(124.59,17.09)(125,10)
\qbezier(125,10)(124.38,6.28)(120,3.23)
\qbezier(120,3.23)(115.62,0.18)(115,0)
\put(45,100){\makebox(0,0)[cc]{$S_B^*$}}
}

\textcolor{red}{
\linethickness{0.3mm}
\qbezier(65,100)(65.58,99.82)(69.82,96.77)
\qbezier(69.82,96.77)(74.07,93.73)(75,90)
\qbezier(75,90)(74.87,85.67)(70.62,82.52)
\qbezier(70.62,82.52)(66.37,79.36)(65,75)
\qbezier(65,75)(64.01,69.59)(65,64)
\qbezier(65,64)(65.98,58.4)(70,55)
\qbezier(70,55)(74.17,52.58)(79.74,54.65)
\qbezier(79.74,54.65)(85.3,56.72)(90,55)
\qbezier(90,55)(93.77,52.57)(95.38,48.06)
\qbezier(95.38,48.06)(97,43.55)(100,40)
\qbezier(100,40)(107.09,34)(116.12,30.77)
\qbezier(116.12,30.77)(125.14,27.54)(130,20)
\qbezier(130,20)(131.77,16.55)(131.64,12.59)
\qbezier(131.64,12.59)(131.52,8.63)(130,5)
\qbezier(130,5)(129.29,3.31)(127.99,2.01)
\qbezier(127.99,2.01)(126.69,0.71)(125,0)
\put(60,100){\makebox(0,0)[cc]{$S_B$}}
}

\end{picture}
  \par\end{centering}
 \caption{Example of containment areas for BFS spanning tree construction.}
\label{fig:ExBFS}
\end{figure}

\subsection{Impossibility results}

\begin{theorem}\label{th:impStricte}
Even under the central daemon, there exists no $(c,1)$-strictly stabilizing protocol for BFS spanning tree construction where $c$ is any (finite) integer.
\end{theorem}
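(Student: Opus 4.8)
The plan is to argue by contradiction. Suppose $\mathcal{P}$ is $(c,1)$-strictly stabilizing for the BFS specification $spec$. Taking zero Byzantine processes, $\mathcal{P}$ is then in particular self-stabilizing, so on any rooted line it converges to, and then stays in, the unique legitimate configuration. I would run $\mathcal{P}$ on the line $r=v_0,v_1,\dots,v_{2c+3}$ (edges $\{v_i,v_{i+1}\}$), with $b:=v_{2c+3}$ the single potentially-Byzantine process. The pivotal vertex is $p:=v_{c+2}$: since $d(p,b)=c+1>c$, $p$ is $c$-correct, yet $d(p,r)=c+2>d(p,b)$, so $p$ is strictly closer to $b$ than to $r$. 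Note $v_0,\dots,v_{c+2}$ are exactly the $c$-correct vertices and $v_{c+3},\dots,v_{2c+3}$ lie within distance $c$ of $b$.

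\emph{Benign phase.} First I would run $\mathcal{P}$ with $b$ behaving correctly, started from the legitimate configuration of the rooted line ($level_{v_i}=i$, $prnt_{v_i}=v_{i-1}$), so that by the stability clause of self-stabilization the O-variables never change again; by $(c,1)$-strict stabilization this execution contains a $(c,1)$-contained configuration $\rho$ in which $v_0,\dots,v_{c+2}$ still carry $level_{v_i}=i$, $prnt_{v_i}=v_{i-1}$ and, by Definition~\ref{def:cfcontained}, keep these frozen with $spec$ true along \emph{every} one-Byzantine execution from $\rho$. The key observation: with $prnt_p=v_{c+1}$ frozen, the only BFS path that can witness $spec(p)$ is $(r,v_1,\dots,v_{c+2})$ (any path through $v_{c+3}$ needs $prnt_p=v_{c+3}$, and $r$ is the only member of $B\cup\{r\}$ on the other side), and its min-condition at $p$ reads $level_{v_{c+1}}=\min\{level_{v_{c+1}},level_{v_{c+3}}\}$, i.e. $level_{v_{c+3}}\ge level_{v_{c+1}}=c+1$. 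So from $\rho$ on, every one-Byzantine execution keeps $level_{v_{c+3}}\ge c+1$.

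\emph{Adversarial phase.} Next I would make $b$ turn Byzantine and sit forever in $level_b=0$, $prnt_b=\bot$, and use the central daemon to activate $v_{2c+2},v_{2c+1},\dots,v_{c+3}$ in an order that propagates this ``fake root'' toward $p$, reaching a configuration $\rho'$ (reachable from $\rho$) in which $v_{c+3},\dots,v_{2c+2}$ hold the fake sub-tree levels $c,c-1,\dots,1$. Then $level_{v_{c+3}}=c<c+1=level_{v_{c+1}}$, the min-condition of $spec(p)$ fails, so $p$ — a $c$-correct process with O-variables still frozen at the values of $\rho$ — violates $spec$ in a configuration reachable from the $(c,1)$-contained $\rho$, contradicting Definition~\ref{def:cfcontained}. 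Since everything uses only a central daemon, this proves the theorem. Morally, a vertex at distance $>c$ from $b$ cannot be shielded precisely when it is strictly closer to $b$ than to $r$: the BFS min-rule then makes it prefer $b$'s fake tree as soon as that tree reaches its neighbour.

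\emph{Main obstacle.} The delicate step is the propagation: for an \emph{arbitrary} $\mathcal{P}$ one cannot just set the levels of the correct processes $v_{c+3},\dots,v_{2c+2}$ — the daemon must \emph{drive} $\mathcal{P}$ there. I would argue by indistinguishability: from the local viewpoint of $b$ and of each $v_{2c+3-j}$ ($1\le j\le c$) the situation is exactly an execution of $\mathcal{P}$ on the genuinely rooted line $b=u_0,u_1,\dots,u_c$, on which $\mathcal{P}$ (self-stabilizing) must eventually settle at $level_{u_j}=j$, driving $u_c\leftrightarrow v_{c+3}$ to level $c$; this is because $\mathcal{P}$ cannot distinguish the permanently-$0$ Byzantine $b$ from a transiently corrupted root it is obliged to recover around, and because the frozen value $c+2$ of $p$ never becomes the running minimum at $v_{c+3}$, hence does not affect its moves. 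Making the two views literally coincide (fixing the schedule, checking that $v_{c+3}$'s extra frozen neighbour $p$ fires no different guard, and handling the junction between the rooted-line part and the frozen part) is where the real bookkeeping of the proof lies.
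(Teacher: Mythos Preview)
Your overall strategy --- fix a line long enough that some $c$-correct vertex $p$ is strictly closer to the Byzantine endpoint $b$ than to $r$, freeze $p$'s O-variables via containment, then let $b$ force a neighbour of $p$ to a small level so that the min-condition in $spec(p)$ fails --- is the right intuition. It is essentially the construction behind the paper's Theorem~\ref{th:impStrong} and behind the Nesterenko--Arora result the paper invokes here (the paper's own proof of Theorem~\ref{th:impStricte} is a one-line appeal to \cite{NA02c}, noting that the BFS specification is $D$-restrictive).

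The gap is in your adversarial phase. You compare the sub-line $b,v_{2c+2},\dots,v_{c+3}$ to a \emph{shorter} genuinely rooted line $u_0,\dots,u_c$ and claim local indistinguishability. But $v_{c+3}$ has a second neighbour $p=v_{c+2}$, whereas the endpoint $u_c$ has none; the two local views are simply not the same. Your patch --- ``the frozen value $c+2$ of $p$ never becomes the running minimum at $v_{c+3}$, hence does not affect its moves'' --- presupposes that the guards of $\mathcal{P}$ at $v_{c+3}$ depend only on the minimum neighbour level. That is true for the $min{+}1$ protocol, but it is exactly what you may \emph{not} assume about an arbitrary $(c,1)$-strictly stabilizing $\mathcal{P}$: nothing prevents $\mathcal{P}$ from reading the full state of every neighbour, and the frozen $level_p=c+2$, $prnt_p=v_{c+1}$ (plus whatever internal variables $p$ carries) may well steer $v_{c+3}$ away from the behaviour it would have on your comparison line. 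So you have not established that $level_{v_{c+3}}$ can be driven below $c+1$.

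The clean way around this is the symmetric-execution device the paper uses for Theorem~\ref{th:impStrong}: start with $level_r=level_b=0$, $prnt_r=prnt_b=\bot$, and all hidden state of $b$ equal to that of $r$; have $b$ copy $r$'s action immediately after each step of $r$, and schedule the two halves of the line symmetrically. The execution is then invariant under $v_i\leftrightarrow v_{2c+3-i}$, so when strict stabilization delivers a $(c,1)$-contained configuration you know the frozen states on \emph{both} sides of $p$, with no uncontrolled neighbour left over. From there the violation of $spec$ at a $c$-correct process (or the forced change of its O-variables once $b$ switches behaviour) follows without any protocol-specific assumption.
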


\begin{proof}
This result is a direct application of Theorem 4 of \cite{NA02c} (note that the specification of BFS tree construction is $D$-restrictive in the worst case where $D$ is the diameter of the system).
\end{proof}

\begin{theorem}\label{th:impStrong}
Even under the central daemon, there exists no $(t,c,1)$-strongly stabilizing protocol for BFS spanning tree construction where $t$ and $c$ are any (finite) integers.
\end{theorem}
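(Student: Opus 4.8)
The plan is to prove the statement by contradiction, exhibiting for each pair of finite integers $t,c$ one fixed system on which no protocol can be $(t,c,1)$-strongly stabilizing, even under the central daemon. The system I would use is simply a \emph{path} $p_0 - p_1 - \cdots - p_{2c+2}$ with the root $r = p_0$ at one extremity, a single Byzantine process $b = p_{c+1}$ in the middle, and a distinguished node $w = p_{2c+2}$ at the other extremity. Both $r$ and $w$ are then at distance $c+1$ from $b$, hence both are $c$-correct; and, crucially, $b$ is a cut vertex separating $r$ from $w$, which is the structural property driving the argument.

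The key observation is that in this path the value of $level_w$ in \emph{any} $c$-legitimate configuration is entirely forced by the level $b$ currently exhibits. A BFS path ending at $w$ must be a contiguous sub-path of the chain ending at $w$ (levels strictly increase along a BFS path, so it is simple), and its origin $v_0$ must satisfy $level_{v_0}=0$ and $v_0\in B\cup\{r\}=\{b,r\}$. If $b$ exhibits $level_b=0$, the origin cannot be $r$ (the sub-path from $r$ to $w$ runs through $b$ and condition~2 of a BFS path would force $level_b=c+1\neq 0$), so the origin is $b$ and $level_w=d(b,w)=c+1$. If instead $b$ exhibits the ``honest'' value $level_b=c+1$ with $prnt_b=p_c$, the origin cannot be $b$ (condition~1 fails), so it is $r$ and $level_w=d(r,w)=2c+2$. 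Since $c+1\neq 2c+2$, the two regimes force distinct values of the O-variable $level_w$.

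With this in hand I would build the contradicting infinite execution. Assume $A$ is $(t,c,1)$-strongly stabilizing, and start from an arbitrary configuration in which $b$ holds the state $(level_b,prnt_b)=(0,\bot)$ and never acts; by strong stabilization the execution reaches, within $l$ rounds, a $(t,k,c,1)$-time contained configuration $\rho^\star$, which is $c$-legitimate and therefore has $level_w=c+1$. Now alternate forever: let $b$ switch to $(c+1,p_c)$ and hold it until (again by strong stabilization) a time contained, hence $c$-legitimate, configuration is reached, necessarily with $level_w=2c+2$; then let $b$ switch back to $(0,\bot)$ and wait for a time contained configuration again, with $level_w=c+1$; and so on. Every phase change strictly changes $level_w$, so in the suffix of this execution starting from $\rho^\star$ the $c$-correct process $w$ changes its O-variables infinitely often. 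This contradicts clause~4 of the definition of a $(t,k,c,1)$-time contained configuration (at most $k$ O-variable changes per $c$-correct process), equivalently clause~2 (eventual freezing of the O-variables of $c$-correct processes). Every step above activates a single process, so the construction is a legal execution under the central daemon.

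The only real subtlety, and the step I would treat most carefully, is making the ``forcing'' airtight: a $c$-legitimate configuration constrains only $c$-correct processes, so the forced value of $level_w$ must be derived purely from the existence of \emph{some} BFS path ending at $w$ together with the fixed state of $b$, with no assumption whatsoever on the intermediate (non-$c$-correct) nodes $p_1,\dots,p_c$ and $p_{c+2},\dots,p_{2c+1}$ — this is exactly why $b$ must be a cut vertex, and why it is convenient to keep $r$ at distance $c+1$ from $b$ (so that $r$ is itself $c$-correct and $level_r=0$ is forced) rather than placing $r$ adjacent to $b$. Everything else (existence of the time contained configurations invoked at each phase, fairness of the constructed schedule, the fact that this really is an execution with at most one Byzantine process) is immediate from the hypotheses on $A$.
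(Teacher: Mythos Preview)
Your argument is correct and takes a genuinely different route from the paper's. The paper uses a chain $p_0,\dots,p_{2c+3}$ with the root $r=p_0$ at one end and the Byzantine process $b=p_{2c+3}$ at the \emph{other} end; $b$ alternates between mimicking the root (state $(\bot,0)$ together with a copy of the root's internal variables) and executing the protocol honestly. A symmetry argument then forces the near-midpoint process $p_{c+2}$---which is $c$-correct, being at distance $c+1$ from $b$---to flip its parent pointer and level between the two regimes at each alternation, giving infinitely many $c$-disruptions. Your construction instead places $b$ at the \emph{center} of a shorter chain, so that $b$ is a cut vertex separating $r$ from the far endpoint $w$, and you force the value of $level_w$ by analysing which of $r$ or $b$ can be the origin $v_0$ of a BFS path ending at $w$, given $b$'s fixed O-variables.

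What each approach buys: the paper's symmetry argument is short and pins down the entire limit configuration in each phase, but it implicitly relies on $b$ being able to shadow the root's hidden (non-O) variables as well as its O-variables---a mild nuisance when the hypothetical protocol may carry auxiliary state. Your cut-vertex argument is agnostic to any such hidden state (only $level_b$ and $prnt_b$ enter the origin analysis), and the forcing of $level_w$ is derived directly from the definition of a BFS path rather than from symmetry; the price is the extra care needed in the second regime to check that a BFS path through $b$ is actually consistent with $b$'s frozen state $(c+1,p_c)$, which you correctly verify. Your treatment of fairness and of the repeated invocation of the strong-stabilization guarantee is at the same level of rigor as the paper's.
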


\begin{proof}
Let $t$ and $c$ be (finite) integers. Assume that there exists a $(t,c,1)$-strongly stabilizing protocol $\mathcal{P}$ for BFS spanning tree construction under the central daemon. Let $S=(V,E)$ be the following system $V=\{p_0=r,p_1,\ldots,p_{2c+2},p_{2c+3}=b\}$ and $E=\{\{p_i,p_{i+1}\},i\in\{0,\ldots,2c+2\}\}$. Process $p_0$ is the real root and process $b$ is a Byzantine one.

Assume that the initial configuration $\rho_0$ of $S$ satisfies: $level_r=level_b=0$, $prnt_r=prnt_b=\bot$ and other variables of $b$ (if any) are identical to those of $r$ (see Figure \ref{fig:impBFS}). Assume now that $b$ takes exactly the same actions as $r$ (if any) immediately after $r$ (note that $d(r,b)>c$ and hence $level_r=0$ and $prnt_r=\bot$ still hold by closure and then $level_b=0$ and $prnt_b=\bot$ still hold too). Then, by symmetry of the execution and by convergence of $\mathcal{P}$ to $spec$, we can deduce that the system reaches in a finite time a configuration $\rho_1$ (see Figure \ref{fig:impBFS}) in which: $\forall i\in\{1,\ldots,c+1\},level_{p_i}=i$ and $prnt_{p_i}=p_{i-1}$ and  $\forall i\in\{c+2,\ldots,2c+2\},level_{p_i}=2c+3-i$ and $prnt_{p_i}=p_{i+1}$ (because this configuration is the only one in which all correct process $v$ such that $d(v,b)>c$ satisfies $spec(v)$ when $level_r=level_b=0$ and $prnt_r=prnt_b=\bot$). Note that $\rho_1$ is $0$-legitimate and $0$-stable and \emph{a fortiori} $c$-legitimate and $c$-stable.

Assume now that the Byzantine process acts as a correct process and executes correctly $\mathcal{P}$. Then, by convergence of $\mathcal{P}$ in fault-free systems (remember that a $(t,c,1)$-strongly stabilizing protocol is a special case of self-stabilizing protocol), we can deduce that the system reaches in a finite time a configuration $\rho_2$ (see Figure \ref{fig:impBFS}) in which: $\forall i\in\{1,\ldots,2c+3\},level_{p_i}=i$ and $prnt_{p_i}=p_{i-1}$ (because this configuration is the only one in which every process $v$ satisfies $spec(v)$). Note that the portion of execution between $\rho_1$ and $\rho_2$ contains at least one $c$-perturbation ($p_{c+2}$ is a $c$-correct process and modifies at least once its O-variables) and that $\rho_2$ is $0$-legitimate and $0$-stable and \emph{a fortiori} $c$-legitimate and $c$-stable.

Assume now that the Byzantine process $b$ takes the following state: $level_b=0$ and $prnt_b=\bot$. This step brings the system into configuration $\rho_3$ (see Figure \ref{fig:impBFS}). From this configuration, we can repeat the execution we constructed from $\rho_0$. By the same token, we obtain an execution of $\mathcal{P}$ which contains $c$-legitimate and $c$-stable configurations (see $\rho_1$) and an infinite number of $c$-perturbation which contradicts the $(t,c,1)$-strong stabilization of $\mathcal{P}$.
\end{proof}

\begin{figure}[t]
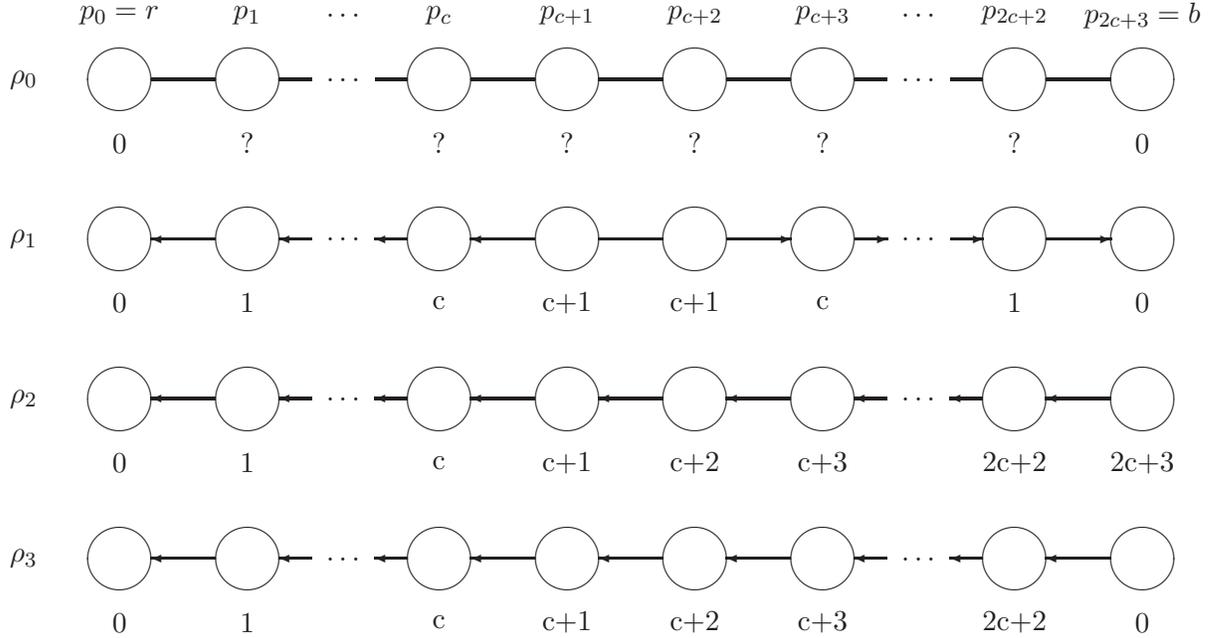

\noindent \begin{centering} \include{impBFS}
  \par\end{centering}
 \caption{Configurations used in proof of Theorem \ref{th:impStrong}.}
\label{fig:impBFS}
\end{figure}

\subsection{Byzantine containment properties of the $min+1$ protocol}

In the $min+1$ protocol, as in many self-stabilizing tree construction protocols, each process $v$ checks locally the consistence of its $level_v$ variable with respect to the one of its neighbors. When it detects an inconsistency, it changes its $prnt_v$ variable in order to choose a ``better'' neighbor. The notion of ``better'' neighbor is based on the global desired property on the tree (here, the BFS requirement implies to choose one neighbor with the minimum level).

When the system may contain Byzantine processes, they may disturb their neighbors by providing alternatively ``better'' and ``worse'' states. 

The $min+1$ protocol chooses an arbitrary one of the ``better'' neighbors (that is, neighbors with the minimal level).  Actually this strategy allows us to achieve the $(S_B,f)$-TA strict stabilization but is not sufficient to achieve the $(t,S^*_B,f)$-TA strong stabilization. To achieve the $(t,S^*_B,f)$-TA strong stabilization, we must bring a slight modification to the protocol: we choose a ``better'' neighbor with a round robin order (among its neighbors with the minimal level). 

Algorithm \ref{algo:BFS} presents our BFS spanning tree construction protocol $\mathcal{SSBFS}$ which is both $(S_{B},f)$-TA strictly and $(t,S_{B}^*,f)$-TA strongly stabilizing (where $f\leq n-1$ and $t=2m$) provided that the root is never Byzantine.

\begin{algorithm}
\caption{$\mathcal{SSBFS}$: A TA strictly and TA strongly stabilizing protocol for BFS tree construction}\label{algo:BFS}
\begin{description}
\item{Data:}~\\
$N_v$: totally ordered set of neighbors of $v$ 
\item{Variables:}~\\
$prnt_v\in N_v\cup\{\bot\}$: pointer on the parent of $v$ in the tree.\\
$level_v\in\mathbb{N}$: integer
\item{Macro:}~\\
For any subset $A\subseteq N_v$, $choose(A)$ returns the first element of $A$ which is bigger than $prnt_v$ (in a round-robin fashion).
\item{Rules:}~\\
$\boldsymbol{(R_r)}::(v=r)\wedge((prnt_v\neq\bot)\vee(level_v\neq 0))\longrightarrow prnt_v:=\bot;level_v:=0$\\
$\boldsymbol{(R_v)}::(v\neq r)\wedge\left((prnt_v=\bot)\vee(level_v\neq level_{prnt_v}+1)\vee(level_{prnt_v}\neq\underset{q\in N_v}{min}\{level_q\})\right)\longrightarrow$\\$~~~~~~~~~~~~~~~ prnt_v:=choose\left(\left\{p\in N_v\left|level_p=\underset{q\in N_v}{min}\{level_q\}\right.\right\}\right);level_v:=level_{prnt_v}+1$
\end{description}
\end{algorithm}

In the following of this section, we provide proofs of topology-aware strict and strong stabilization of $\mathcal{SSBFS}$. First at all, remember that the real root $r$ can not be a Byzantine process by hypothesis. Note that the subsystems whose set of nodes are respectively $V\setminus S_B$ and $V\setminus S_B^*$ are connected by construction.

\paragraph{$(S_B,n-1)$-TA strict stabilization}~\\

Given a configuration $\rho\in C$ and an integer $d\in\{0,\ldots,D\}$, let us define the following predicate: 
\[I_d(\rho)\equiv \forall v\in V,level_v\geq min\left\{d,\underset{u\in B\cup\{r\}}{min}\{d(v,u)\}\right\}\]

\begin{lemma}\label{lem:Iclosed}
For any integer $d\in\{0,\ldots,D\}$, the predicate $I_d$ is closed.
\end{lemma}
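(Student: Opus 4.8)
The plan is to show that if a configuration $\rho$ satisfies $I_d$, then so does any configuration $\rho'$ obtained from $\rho$ by one step of the daemon, i.e. $\rho \stackrel{R}{\mapsto} \rho'$ for some $R \subseteq V$. Since closure is preserved under composition of steps, this suffices. Fix a process $v$ and let $m_v := \min\{d,\min_{u\in B\cup\{r\}}d(v,u)\}$ be the quantity $I_d$ requires $level_v$ to dominate; note that $m_v$ depends only on the (fixed) topology and the (fixed) set $B$, not on the configuration, so it is the same in $\rho$ and $\rho'$. We must show $level_v \geq m_v$ in $\rho'$. I would split on the nature of $v$.

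First, if $v \in B$, then $m_v \leq d(v,v) = 0$, so the inequality $level_v \geq m_v$ holds trivially regardless of what value the Byzantine process writes. If $v = r$, then again $m_v \leq d(r,r) = 0 \leq level_r$. If $v \notin B \cup \{r\}$ and $v$ does not execute an action in this step (either $v \notin R$ or $v$ is disabled), then $level_v$ is unchanged, so $level_v \geq m_v$ still holds by the hypothesis $I_d(\rho)$. The only remaining case is $v \notin B\cup\{r\}$ and $v$ executes rule $(R_v)$. Then in $\rho'$ we have $level_v = level_{prnt_v} + 1$ where the value of $level_{prnt_v}$ is read from $\rho$ (the state of the chosen neighbor $p = prnt_v$ before the step), and moreover $p$ is chosen so that $level_p = \min_{q\in N_v}\{level_q\}$ in $\rho$; in particular $level_v = 1 + \min_{q\in N_v}\{level_q\}$ evaluated in $\rho$.

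The key point is then a triangle-inequality argument on $m$. For the neighbor $p$, the hypothesis $I_d(\rho)$ gives $level_p \geq \min\{d, \min_{u\in B\cup\{r\}} d(p,u)\}$ in $\rho$. For every $u \in B\cup\{r\}$, since $p$ is a neighbor of $v$ we have $d(v,u) \leq d(p,u) + 1$, hence $d(p,u) \geq d(v,u) - 1$, and also $d(p,u) \geq 0 \geq d(v,u)-1$ is automatic; taking the minimum over $u$ gives $\min_u d(p,u) \geq \min_u d(v,u) - 1$. Combining, $level_v = level_p + 1 \geq \min\{d, \min_u d(p,u)\} + 1 \geq \min\{d+1, \min_u d(v,u)\} \geq \min\{d, \min_u d(v,u)\} = m_v$, which is exactly what we need. (If one prefers to avoid the corner case where $N_v = \emptyset$, note that $r$ being the root makes the graph connected, so every non-root has a neighbor; and if $v$ executes $(R_v)$ at all, $N_v \neq \emptyset$.) I do not expect a serious obstacle here; the only thing to be careful about is the semantics of simultaneous actions — all reads in the statement of $(R_v)$ refer to $\rho$, not $\rho'$ — which is why the argument is phrased entirely in terms of values in $\rho$, to which $I_d(\rho)$ applies. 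This completes the induction step and hence the proof.
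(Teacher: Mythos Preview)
Your proof is correct and follows essentially the same approach as the paper: both argue that a single step preserves $I_d$ by casing on whether $v$ is the root, a Byzantine process, or a correct non-root that fires $(R_v)$, and in the last case both use $I_d(\rho)$ on the chosen neighbor $p$ together with the triangle inequality $\min_{u}d(p,u)\ge \min_{u}d(v,u)-1$ to conclude $level_v=level_p+1\ge m_v$. Your write-up is slightly more explicit about the non-activated case and about the read-from-$\rho$ semantics, but the argument is the same.
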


\begin{proof}
Let $d$ be an integer such that $d\in\{0,\ldots,D\}$. Let $\rho\in C$ be a configuration such that $I_d(\rho)=true$ and $\rho'\in C$ be a configuration such that $\rho \stackrel{R}{\mapsto} \rho'$ is a step of $\mathcal{SSBFS}$.

If the root process $r\in R$ (respectively a Byzantine process $b\in R$), then we have $level_r=0$ (respectively $level_b\geq 0$) in $\rho'$ by construction of $\boldsymbol{(R_r)}$ (respectively by definition of $level_b$). Hence, $level_r\geq min\left\{d, \underset{u\in B\cup\{r\}}{min}\{d(r,u)\}\right\}=0$ (respectively $level_b\geq min\left\{d,\underset{u\in B\cup\{r\}}{min}\{d(b,u)\}\right\}=0$).

If a correct process $v\in R$ satisfies $v\neq r$, then there exists a neighbor $p$ of $v$ which satisfies the following property in $\rho$ (since $v$ is activated and $I_d(\rho)=true$):
\[level_p=\underset{q\in N_v}{min}\{level_q\}\geq min\left\{d,\underset{u\in B\cup\{r\}}{min}\{d(p,u)\}\right\}\]
Once, $v$ is activated, we have: $level_v=level_p+1$ in $\rho'$. Let be $\delta=\underset{u\in B\cup\{r\}}{min}\{d(v,u)\}$. Then, we have: $\underset{u\in B\cup\{r\}}{min}\{d(p,u)\}\geq \delta-1$ (otherwise, we have a contradiction with the fact that $\delta=\underset{u\in B\cup\{r\}}{min}\{d(v,u)\}$ and that $v$ and $p$ are neighbors). Consequently, $\rho'$ satisfies: 
\[\begin{array}{rcl}
level_v=level_p+1&\geq&min\left\{d,\underset{u\in B\cup\{r\}}{min}\{d(p,u)\}\right\}+1\\
&\geq& min\{d,\delta-1\}+1\\
&\geq& min\{d,\delta\}\\
&\geq&min\left\{d,\underset{u\in B\cup\{r\}}{min}\{d(v,u)\}\right\}
\end{array}\]

We can deduce that $I_d(\rho')=true$, that concludes the proof.
\end{proof}

Let $\mathcal{LC}$ be the following set of configurations:
\[\mathcal{LC}=\left\{\rho\in C\left|(\rho \text{ is }S_B\text{-legitimate for }spec)\wedge(I_D(\rho)=true)\right.\right\}\]

\begin{lemma}\label{lem:SBTAcontained}
Any configuration of $\mathcal{LC}$ is $(S_B,n-1)$-TA contained for $spec$.
\end{lemma}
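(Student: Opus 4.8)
The plan is to show that any configuration $\rho_0 \in \mathcal{LC}$ satisfies both requirements of Definition~\ref{def:SfTAcontained}: in every execution starting from $\rho_0$ with at most $n-1$ Byzantine processes, (i) every configuration is $S_B$-legitimate and (ii) no $S_B$-correct process ever changes its O-variables. Since the two requirements are tightly linked (a change of O-variable by an $S_B$-correct process would be precisely what breaks $S_B$-legitimacy), the natural approach is to prove that no $S_B$-correct process is ever enabled, as long as the configuration is $S_B$-legitimate; combined with closure of $I_D$ (Lemma~\ref{lem:Iclosed}) this gives an inductive argument over the execution.

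The key steps, in order, are as follows. First I would recall that the subsystem induced by $V \setminus S_B$ is connected and contains the root $r$ (noted just before the lemma), and that for an $S_B$-correct node $v$ we have $\min_{b \in B} d(v,b) > d(r,v)$, hence $\min_{u \in B \cup \{r\}} d(v,u) = d(r,v)$. Next, from $I_D(\rho)$ I would extract that every node $v$ has $level_v \geq \min\{D, d(r,v)\} = d(r,v)$ (using $d(r,v) \le D$). Then, working in an $S_B$-legitimate configuration, I would argue by induction on $d(r,v)$ that each $S_B$-correct process $v$ satisfies $level_v = d(r,v)$ exactly, that $prnt_v$ points to a neighbor at distance $d(r,v)-1$ from $r$ (for $v \ne r$), and moreover that $\min_{q \in N_v}\{level_q\} = d(r,v)-1$: the lower bound on neighbors' levels comes from $I_D$, while the existence of a neighbor with level exactly $d(r,v)-1$ comes from the BFS-path structure guaranteed by $spec(v)$ together with the induction hypothesis applied to that parent. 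The base case $v = r$ is immediate since $r$ is never Byzantine and $spec(r)$ forces $prnt_r = \bot$, $level_r = 0$. Once this structural description is established, I would check directly that neither $(R_r)$ nor $(R_v)$ is enabled at any $S_B$-correct process: the guard of $(R_r)$ fails because $prnt_r = \bot$ and $level_r = 0$; the guard of $(R_v)$ fails because $prnt_v \ne \bot$, $level_v = level_{prnt_v}+1$, and $level_{prnt_v} = d(r,v)-1 = \min_{q \in N_v}\{level_q\}$. Hence no $S_B$-correct process moves in the step $\rho \stackrel{R}{\mapsto} \rho'$; only Byzantine processes and processes in $S_B$ can change state, so the O-variables of $S_B$-correct processes are unchanged and $\rho'$ is again $S_B$-legitimate. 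Since $I_D$ is closed by Lemma~\ref{lem:Iclosed}, $\rho' \in \mathcal{LC}$, and the induction on the execution carries through.

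The main obstacle I anticipate is the inductive structural characterization of $S_B$-correct processes: one must be careful that the BFS path witnessing $spec(v)$ stays inside $V \setminus S_B$ (or at least that its relevant prefix does) so that the induction hypothesis is applicable to $prnt_v$ — this is where connectivity of $V \setminus S_B$ and the distance inequality $\min_{b\in B} d(v,b) > d(r,v)$ are essential, since they guarantee that a shortest path from $r$ to $v$ avoids $S_B$ and that $prnt_v$ lies on such a path. A secondary subtlety is handling the $\min$ with $D$ in the definition of $I_D$, but this is harmless because all distances in the graph are at most $D$. Everything else is a routine unwinding of the guards of $(R_r)$ and $(R_v)$.
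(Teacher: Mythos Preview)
Your overall approach is sound and close to the paper's: both arguments hinge on the closure of $I_D$ (Lemma~\ref{lem:Iclosed}) to show that no $S_B$-correct process can ever become enabled. The paper is somewhat terser: it notes that an $S_B$-correct $v$ with $level_v=d(r,v)$ can only become enabled if some neighbor $u$ drops $level_u$ to at most $level_v-2$, and then derives a contradiction with $I_D$-closure via the triangle inequality $\min_{p\in B\cup\{r\}} d(u,p)\geq d(r,v)-1$. Your version makes the induction on $d(r,v)$ and the fact that the witnessing BFS path stays inside $V\setminus S_B$ explicit; this extra structure is actually helpful, since it cleanly covers the case where $prnt_v$ itself might change its level (a case the paper handles only implicitly).

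There is, however, one genuine misstatement you should repair. The claim that ``from $I_D(\rho)$ every node $v$ has $level_v \geq \min\{D,d(r,v)\}=d(r,v)$'' is false. The predicate $I_D$ only gives $level_v \geq \min\bigl\{D,\min_{u\in B\cup\{r\}} d(v,u)\bigr\}$, and for nodes in $S_B$ (or Byzantine nodes) the right-hand side can be strictly smaller than $d(r,v)$. What your induction actually needs is the weaker bound on \emph{neighbors} of an $S_B$-correct $v$: if $d(r,v)=k$ and $q\in N_v$, then $\min_{u\in B\cup\{r\}} d(q,u)\geq \min_{u\in B\cup\{r\}} d(v,u)-1=k-1$, so $I_D$ yields $level_q\geq k-1$. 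This is exactly the inequality the paper invokes, and it suffices both for the lower bound $\min_{q\in N_v}\{level_q\}\geq k-1$ and (combined with $level_{prnt_v}=k-1$) for the conclusion that $d(r,prnt_v)=k-1$ and hence $prnt_v\notin S_B$. Once you replace the overreaching global claim by this local one, the rest of your argument goes through unchanged.
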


\begin{proof}
Let $\rho$ be a configuration of $\mathcal{LC}$. By construction, $\rho$ is $S_B$-legitimate for $spec$. 

In particular, the root process satisfies: $prnt_r=\bot$ and $level_r=0$. By construction of $\mathcal{SSBFS}$, $r$ is not enabled and then never modifies its O-variables (since the guard of the rule of $r$ does not involve the state of its neighbors). 

In the same way, any process $v\in V\setminus (S_B\cup\{r\})$ satisfies: $prnt_v\in N_v$, $level_v=level_{prnt_v}+1$, and $level_{prnt_v}=\underset{u\in N_v}{min}\{level_u\}$. Note that, as $v\in V\setminus (S_B\cup\{r\})$ and $spec(v)$ holds in $\rho$, we have: $level_v=d(v,r)$. Hence, process $v$ is not enabled in $\rho$. It remains so until none of its neighbors $u$ modifies its $level_u$ variable to a value $\alpha$ such that $\alpha\leq level_v-2$.

Assume that there exists an execution $e$ starting from $\rho$ in which a neighbor $u$ of a process $v\in V\setminus (S_B\cup\{r\})$ modifies $level_u$ to satisfy $level_u\leq level_v-2$ (without loss of generality, assume that $u$ is the first process to modify $level_u$ in such a way in $e$). Note that  $\underset{p\in B\cup\{r\}}{min}\{d(u,p)\}\geq d(v,r)-1$ (otherwise, we have a contradiction with the fact that $d(v,r)=\underset{p\in B\cup\{r\}}{min}\{d(v,p)\}$ and that $v$ and $u$ are neighbors). Hence, we have:
\[\begin{array}{rcl}
\underset{p\in B\cup\{r\}}{min}\{d(u,p)\}&\geq& d(v,r)-1\\
&>&d(v,r)-2\\
&>&level_u
\end{array}\]
This contradicts the closure of predicate $I_D$ established in Lemma \ref{lem:Iclosed}.

Consequently, there exists no such execution and process $v$ remains infinitely disabled and then never modifies its O-variables. This concludes the proof.
\end{proof}

\begin{lemma}\label{lem:convergenceLC}
Starting from any configuration, any execution of $\mathcal{SSBFS}$ reaches a configuration of $\mathcal{LC}$ in a finite time.
\end{lemma}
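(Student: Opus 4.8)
The plan is to prove the statement in two stages. Stage one shows that every execution eventually reaches a configuration in which $I_D$ holds, and hence (by Lemma~\ref{lem:Iclosed}) holds forever afterwards. Stage two shows that, once $I_D$ holds, the protocol drives every $S_B$-correct process to satisfy $spec$, so that an $S_B$-legitimate configuration satisfying $I_D$ --- i.e. a configuration of $\mathcal{LC}$ --- is reached.

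For stage one, I would argue by induction on $d\in\{0,\ldots,D\}$ that every execution eventually reaches a configuration satisfying $I_d$. The case $d=0$ is trivial since levels are non-negative. For the inductive step, assume that from some point on $I_d$ holds (it is closed by Lemma~\ref{lem:Iclosed}). Call a process \emph{bad} if it violates the bound $level_v<\min\{d+1,\min_{u\in B\cup\{r\}}d(v,u)\}$; such a $v$ is necessarily correct, distinct from $r$, and satisfies $level_v\le d$, so every neighbor $q$ of $v$ has $d(q,B\cup\{r\})\ge d(v,B\cup\{r\})-1\ge level_v$ and hence, by $I_d$, $level_q\ge level_v$. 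It follows that the guard of $\boldsymbol{(R_v)}$ is true at $v$ (a disabled $v$ would force $level_{prnt_v}=level_v-1<level_v\le\min_{q\in N_v}level_q\le level_{prnt_v}$, a contradiction), so a bad process stays enabled until it acts; and when it acts it sets $level_v:=\min_{q\in N_v}level_q+1$, a value that, by the computation already carried out in the proof of Lemma~\ref{lem:Iclosed}, meets the $I_{d+1}$ bound. That same computation shows this bound, once met at $v$, is preserved by every subsequent action (Byzantine steps only change Byzantine states). Since there are finitely many processes and the daemon is fair, all processes eventually meet the $I_{d+1}$ bound. Taking $d=D$ concludes stage one.

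For stage two, assume $I_D$ holds throughout and argue by induction on $\ell=d(v,r)$ over the processes of $V\setminus S_B$, whose induced subgraph is connected, contains $r$, and in which every shortest path to $r$ stays. I would prove that there is a finite time after which every $v\in V\setminus S_B$ with $d(v,r)\le\ell$ satisfies $level_v=d(v,r)$, is permanently disabled, and is the last vertex of a BFS path. For $\ell=0$ (i.e. $v=r$), rule $\boldsymbol{(R_r)}$ together with fairness forces $prnt_r=\bot$ and $level_r=0$, and since the guard of $\boldsymbol{(R_r)}$ does not refer to neighbors this persists. For $\ell\ge1$, fix $v\in V\setminus S_B$ with $d(v,r)=\ell$: every neighbor $q$ of $v$ satisfies $d(q,B\cup\{r\})\ge\ell-1$, hence $level_q\ge\ell-1$ by $I_D$, while the shortest-path predecessor $u_0$ of $v$ lies in $V\setminus S_B$ with $d(u_0,r)=\ell-1$, so by the induction hypothesis $level_{u_0}=\ell-1$ eventually; thus $\min_{q\in N_v}level_q=\ell-1$ from then on. Consequently, either $v$ is already disabled with $level_v=\ell$ or its first later action sets $level_v:=\ell$; in both cases $level_v=\ell$ holds permanently, and each further action of $v$ moves $prnt_v$ cyclically through the set of level-$(\ell-1)$ neighbors of $v$, a set that always contains $u_0$, so after at most $\Delta_v$ of its own actions $prnt_v=u_0$, at which point $v$ is permanently disabled. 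Appending $v$ to the now-frozen BFS path ending at $u_0$ yields a BFS path ending at $v$, so $spec(v)$ holds. Taking $\ell$ as large as necessary yields a configuration that is $S_B$-legitimate and satisfies $I_D$, hence belongs to $\mathcal{LC}$.

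The main obstacle is the parent-pointer argument in stage two. A neighbor of $v\in V\setminus S_B$ may lie in $S_B$ and be perturbed by Byzantine processes forever; by $I_D$ its level stays at least $\ell-1$, but it may repeatedly rise above $\ell-1$ and thereby re-enable $v$ whenever $prnt_v$ points at it. This is exactly why the macro $choose$ uses a round-robin order: one must show that the cyclic advance of $prnt_v$ cannot skip the permanently available good neighbor $u_0$, so that $v$ latches onto $u_0$ after boundedly many of its own actions and never moves again. Care is also needed to appeal to fairness (not merely to the existence of an enabled process), to check that each "stabilized" predicate is genuinely preserved by all later steps including arbitrary Byzantine actions, and to note, in the inductive step, that the BFS path ending at $u_0$ stays frozen precisely because no level along it can drop, again by $I_D$.
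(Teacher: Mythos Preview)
Your proof is correct and follows a slightly different decomposition than the paper. The paper runs a single induction on $d\in\{0,\ldots,D\}$, proving simultaneously that $I_d$ eventually holds and that every $v\notin S_B$ with $d(v,r)\le d$ satisfies $spec(v)$; you instead separate these into two consecutive stages (first reach $I_D$, then stabilize the $S_B$-correct processes layer by layer). Both arguments go through; the paper's interleaved version is marginally shorter, while yours makes the role of $I_D$ as a global invariant more explicit.

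One point deserves correction: the ``main obstacle'' you identify---that a neighbor of $v$ lying in $S_B$ could repeatedly take level $\ell-1$, attract $prnt_v$, then rise again and re-enable $v$---does not arise in this lemma, and the round-robin behaviour of $choose$ is not needed here (nor does the paper use it). For $v\in V\setminus S_B$ with $d(v,r)=\ell$, every neighbor $u$ satisfies $d(u,b)\ge d(v,b)-1>\ell-1$ for all $b\in B$, hence $d(u,b)\ge\ell$. Under $I_D$ this gives $level_u\ge d(u,B\cup\{r\})$, so a neighbor with $level_u=\ell-1$ must have $d(u,r)=\ell-1$ and therefore lies in $V\setminus S_B$; by your own induction hypothesis such $u$ is already permanently frozen with $level_u=\ell-1$ and $spec(u)$ true. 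Thus the very first activation of $v$ after the hypothesis holds selects one of these stable neighbors and $v$ becomes disabled forever---no cycling through $\Delta_v$ choices is required. The round-robin mechanism only becomes essential later, for the $(S_B^*,n-1)$-TA strong stabilization (Lemmas~\ref{lem:LCdegvactions} and~\ref{lem:activatedorspec}), where processes of $E_B=S_B\setminus S_B^*$ genuinely have Byzantine-influenced neighbors at the same minimum distance as $r$.
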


\begin{proof}
We are going to prove the following property by induction on $d\in\{0,\ldots ,D\}$:

$(\mathcal{P}_d)$: Starting from any configuration, any run of $\mathcal{SSBFS}$ reaches a configuration $\rho$ such that $I_d(\rho)=true$ and in which any process $v\notin S_B$ such that $d(v,r)\leq d$ satisfies $spec(v)$.


\begin{description}
\item[Initialization:] $d=0$.\\
Let $\rho$ be an arbitrary configuration. Then, it is obvious that $I_0(\rho)$ is satisfied.

If a process $v\notin S_B$ satisfies $d(v,r)\leq 0$, then $v=r$. If $v$ does not satisfy $spec(v)$ in $\rho$, then $v$ is continuously enabled. Since the scheduling is fair, $v$ is activated in a finite time and then $v$ satisfies $spec(v)$ in a finite time. Then, we proved that $(\mathcal{P}_0)$ holds.

\item[Induction:] $d\geq 1$ and $\mathcal{P}_{d-1}$ is true.\\
We know, by $\mathcal{P}_{d-1}$, that any run of $\mathcal{SSBFS}$ under a distributed fair scheduler reaches a configuration $\rho$ such that $I_{d-1}(\rho)=true$ and in which any process $v\notin S_B$ such that $d(v,r)\leq d-1$ satisfies $spec(v)$.

Let $E_d=\left\{v\in V\left|\underset{u\in B\cup\{r\}}{min}\{d(v,u)\}\geq d\right.\right\}$. Note that $I_{d-1}(\rho)$ implies that $\forall v\in E_d, level_v\geq d-1$ (since $\forall v\in E_d, min\left\{d-1,\underset{u\in B\cup\{r\}}{min}\{d(v,u)\}\right\}=d-1$ by construction).

Note that any process $v\in E_d$ such that $level_v=d-1$ is enabled by $\boldsymbol{(R_v)}$ since we have: $level_{prnt_v}\geq d-1$ (by $I_{d-1}(\rho)$ and the fact that $prnt_v$ is a neighbor of $v$) and thus $level_v=d-1<level_{prnt_v}+1$. Moreover, this rule remains enabled until $v$ is activated by closure of $I_{d-1}(\rho)$ (see Lemma \ref{lem:Iclosed}). As the scheduling is fair, we deduce that any process $v\in E_d$ such that $level_v=d-1$ is activated in any run starting from $\rho$ and $level_v \ge d$ holds. Then, we can conclude that any run starting from $\rho$ reaches in a finite time a configuration $\rho'$ such that $I_d(\rho')=true$.

Let $v\notin S_B$ be a process such that $d(r,v)=d$. We distinguish the following two cases:

\begin{description}
\item[Case 1:] $spec(v)$ holds in $\rho'$ (and then $level_v=d$).\\
By closure of $I_d$, any configuration of any run starting from $\rho'$ satisfies $I_d$. Moreover, $v$ satisfies $d(v,r)<\underset{u\in B}{min}\{d(v,u)\}$. Hence, there exists a BFS path from $v$ to $r$. By construction, process $v$ is then not enabled (remind that any neighbor $u$ of $v$ satisfies: $level_u\geq min\left\{d,\underset{w\in B\cup\{r\}}{min}\{d(u,w)\}\right\}\geq d$). In conclusion, $v$ always satisfies $spec(v)$ in any run starting from $\rho'$.
\item[Case 2:] $spec(v)$ does not hold in $\rho'$.\\
By construction of $\rho'$, we can split $N_v$ into two sets $S$ and $\bar{S}$ such that any process $u$ of $S$ satisfies $level_u=d(r,u)=d-1$ and $spec(u)$ (and thus there exists a BFS path from $u$ to $r$) and any process $u$ of $\bar{S}$ satisfies $level_u\geq d$ (remind that $I_d(\rho')=true$ and then $level_u\geq min\left\{d,\underset{p\in B\cup\{r\}}{min}\{d(u,p)\}\right\}\geq d$).

As $spec(v)$ does not hold in $\rho'$, we can deduce that $v$ is enabled in $\rho'$. As $I_d$ is closed (by Lemma \ref{lem:Iclosed}), we can deduce that $v$ remains enabled. Since the scheduling is fair, we conclude that $v$ is activated in a finite time in any run starting from $\rho'$ and then $prnt_v$ is a process of $S$ that implies that $v$ satisfies $spec(v)$ in a finite time in any run starting from $\rho'$.
\end{description}
In conclusion, $\mathcal{P}_{d}$ is true, that ends the induction.
\end{description}

Then, it is easy to see that $\mathcal{P}_D$ (where $D$ is the diameter of the system) implies the result.
\end{proof}

\begin{theorem}\label{th:possTAStricte}
$\mathcal{SSBFS}$ is a $(S_B,n-1)$-TA strictly stabilizing protocol for $spec$.
\end{theorem}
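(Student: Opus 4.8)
The plan is to assemble the statement directly from the three preceding lemmas, since all the technical content has already been isolated there. Recall that by Definition~\ref{def:SfTAStrictstabilizing}, proving $(S_B,n-1)$-TA strict stabilization amounts to showing that, for any placement of at most $n-1$ Byzantine processes (with the root never among them, as assumed throughout this section), every execution $e=\rho_0,\rho_1,\ldots$ of $\mathcal{SSBFS}$ contains some configuration $\rho_i$ that is $(S_B,n-1)$-TA contained for $spec$.

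First I would invoke Lemma~\ref{lem:convergenceLC}: starting from the arbitrary initial configuration $\rho_0$, the execution $e$ reaches, after finitely many steps, a configuration $\rho_i\in\mathcal{LC}$, i.e.\ a configuration that is $S_B$-legitimate for $spec$ and satisfies $I_D$. Then I would apply Lemma~\ref{lem:SBTAcontained}, which states precisely that every configuration of $\mathcal{LC}$ is $(S_B,n-1)$-TA contained for $spec$; hence $\rho_i$ has the required containment property. Combining the two yields a configuration $\rho_i$ occurring in $e$ that is $(S_B,n-1)$-TA contained, which is exactly what Definition~\ref{def:SfTAStrictstabilizing} requires, so the theorem follows.

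There is essentially no obstacle at this stage: the proof is a two-line corollary of the lemmas. The only points worth a sentence of care are bookkeeping ones — checking that the hypothesis $f\le n-1$ is consistent with Lemmas~\ref{lem:SBTAcontained} and~\ref{lem:convergenceLC} (both are proved for an arbitrary Byzantine set $B$ with $r\notin B$, hence for any $f$ up to $n-1$), and observing that reachability of $\mathcal{LC}$ is precisely the content of Lemma~\ref{lem:convergenceLC}. All the genuine difficulty — the closure argument for $I_d$ in Lemma~\ref{lem:Iclosed} and the layer-by-layer convergence induction in Lemma~\ref{lem:convergenceLC}, together with the non-enabledness analysis of $S_B$-correct processes in Lemma~\ref{lem:SBTAcontained} — has already been carried out before this point.
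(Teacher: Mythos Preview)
Your proposal is correct and matches the paper's own proof essentially verbatim: the paper simply states that the theorem is a direct consequence of Lemmas~\ref{lem:SBTAcontained} and~\ref{lem:convergenceLC}. Your additional remarks about the $f\le n-1$ hypothesis and the role of Lemma~\ref{lem:Iclosed} are accurate but go slightly beyond what the paper bothers to spell out.
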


\begin{proof}
This result is a direct consequence of Lemmas \ref{lem:SBTAcontained} and \ref{lem:convergenceLC}.
\end{proof}

\paragraph{$(2m,S_B^*,n-1)$-TA strong stabilization}

Let be $E_B=S_B\setminus S_B^*$ (\emph{i.e.} $E_B$ is the set of process $v$ such that $d(r,v)=\underset{b\in B}{min}\{d(v,b)\}$).

\begin{lemma}\label{lem:LCdegvactions}
If $\rho$ is a configuration of $\mathcal{LC}$, then any process $v\in E_B$ is activated at most $\Delta_v$ times in any execution starting from $\rho$.
\end{lemma}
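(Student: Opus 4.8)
The plan is to analyze the behavior of a process $v\in E_B$ starting from a configuration $\rho\in\mathcal{LC}$, and to show that each time $v$ changes its O-variables it ``uses up'' one of its neighbors, so that at most $\Delta_v$ such changes can occur. First I would record the basic structural facts available in $\mathcal{LC}$: since $\rho$ is $S_B$-legitimate and $I_D(\rho)$ holds (and is closed by Lemma~\ref{lem:Iclosed}), every process $u$ satisfies $level_u\geq\min\{D,\min_{w\in B\cup\{r\}}d(u,w)\}$ throughout any execution from $\rho$. For $v\in E_B$ we have $d(r,v)=\min_{b\in B}d(v,b)=:\delta$, so by the triangle inequality every neighbor $u$ of $v$ satisfies $\min_{w\in B\cup\{r\}}d(u,w)\geq\delta-1$, hence $level_u\geq\min\{D,\delta-1\}=\delta-1$ holds forever (using $\delta\le D$). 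The key point is therefore that no neighbor of $v$ can ever drop its level below $\delta-1$.

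Next I would argue about what happens when $v$ is activated. When $v$ executes $\boldsymbol{(R_v)}$ it sets $prnt_v:=choose(\{p\in N_v\mid level_p=\min_{q\in N_v}level_q\})$ and $level_v:=level_{prnt_v}+1$. Since the current minimum neighbor level is at least $\delta-1$, after activation $level_v\geq\delta$; I claim that in fact $v$ then has $level_v\in\{\delta, \delta+1\}$ stabilizing toward $\delta$, and more importantly that $v$ becomes and stays disabled unless some neighbor changes. The crucial invariant to extract is: once $v$ is activated with a neighbor $p$ such that $level_p=\delta-1$, $v$ is disabled as long as that neighbor keeps $level_p=\delta-1$ and no neighbor goes below $\delta-1$ — which, as established above, can never happen. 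So the only way $v$ can be re-enabled is if its current parent $p$ increases $level_p$ above $\delta-1$ (it cannot decrease below). Each time this happens $v$ is re-activated and, by the round-robin $choose$ macro, picks the \emph{next} neighbor (strictly bigger than the current $prnt_v$ in the cyclic order) among those currently at the minimum level. Since a neighbor that has left level $\delta-1$ cannot return to it (by the same closure argument applied to that neighbor, whose distance to $B\cup\{r\}$ is also at least $\delta-1$, so its level is $\ge\delta-1$, but once above it stays above — this needs care), the set of neighbors usable as parent shrinks by one each time, giving at most $\Delta_v$ activations of $v$.

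The main obstacle, and the step I would spend the most effort on, is making precise the claim that ``a neighbor $u$ of $v$ that has risen above level $\delta-1$ never comes back to $\delta-1$.'' The closure lemma only gives a lower bound $level_u\geq\delta-1$; it does not by itself forbid $u$ from oscillating between $\delta-1$ and higher values. To close this gap I would need a companion monotonicity-type argument for processes at the frontier: a process $u$ with $\min_{w\in B\cup\{r\}}d(u,w)=\delta-1$ can only have $level_u=\delta-1$ if it has a neighbor at level $\delta-2$, and the only candidates at level $\delta-2$ are processes at distance $\ge\delta-2$ from $B\cup\{r\}$ that achieve equality — ultimately this chains back toward $r$ (or toward a Byzantine process, which is excluded by the definition of $E_B$ since $d(r,v)=\delta\le d(v,b)$ forces the short path to go through $r$-side vertices, not $B$). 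So I would prove by a sub-induction on levels, for configurations reachable from $\mathcal{LC}$, that processes near the $r$-side of the frontier have stabilized $level$ values, and in particular the neighbors of $v$ that lie on BFS paths to $r$ keep their levels fixed while the others only ever increase. Combining: the neighbors of $v$ partition into those with permanently fixed level and those whose level is non-decreasing and, once it exceeds the running minimum, permanently excludes them from being re-chosen as $v$'s parent; round-robin then guarantees at most $\Delta_v$ O-variable changes at $v$, as required.

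\qed
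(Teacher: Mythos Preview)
Your argument hinges on the claim that once a neighbor $u$ of $v$ leaves level $\delta-1$ it never returns, so that each activation of $v$ ``uses up'' one neighbor. This step does not go through: a neighbor $u$ with $\min_{w\in B\cup\{r\}}d(u,w)=\delta-1$ where the minimum is realised by some Byzantine $b$ can have its level pushed back and forth between $\delta-1$ and larger values, since $b$ may repeatedly set $level_b=\delta-2$ (forcing $level_u=\delta-1$ at $u$'s next activation) and then raise $level_b$ again. The closure of $I_D$ only yields the lower bound $level_u\ge\delta-1$; there is no monotonicity available. Hence the ``shrinking set of usable parents'' picture, and the round-robin counting you build on it, collapse. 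Your last paragraph senses the difficulty but the proposed sub-induction on levels would have to be carried out on the $r$-side only, and once you isolate that you are essentially re-deriving Lemma~\ref{lem:SBTAcontained}.

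The paper's proof is both shorter and sidesteps the obstacle by not attempting to control all neighbors. It argues that $v\in E_B$ has a neighbor $u$ lying in $V\setminus S_B$; by Lemma~\ref{lem:SBTAcontained} such a $u$ satisfies $spec(u)$ in every configuration reachable from $\rho$, so $level_u=d(r,u)=\delta-1$ is permanently frozen. Together with your (correct) observation that every neighbor has level at least $\delta-1$, this single $u$ is always a minimum-level neighbor of $v$ and hence always belongs to the argument of $choose$. Round-robin then guarantees that $prnt_v$ lands on $u$ within at most $\Delta_v$ activations, and once $prnt_v=u$ with $level_v=level_u+1$ the process $v$ is disabled forever. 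The key idea you are missing is thus: one permanently stable anchor in the candidate set suffices; global monotonicity of the neighborhood is neither true nor needed.
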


\begin{proof}
Let $\rho$ be a configuration of $\mathcal{LC}$ and $v$ a process of $E_B$. By construction, there exists a neighbor $u$ of $v$ such that $u\in V\setminus S_B$. Then, we know that $spec(u)$ holds in $\rho$. By Lemma \ref{lem:SBTAcontained}, we are ensured that $spec(u)$ remains true in any configuration of any execution starting from $\rho$. In particular, $level_u=d(r,u)$. By closure of $I_D(\rho)$, we know that $level_p\geq d(r,u)$ for any neighbor $p$ of $v$. Consequently, $level_u=\underset{q\in N_v}{min}\{level_q\}$. This implies that, if $prnt_v=u$ and $level_v=level_u+1$ in a configuration $\rho'$, then $spec(v)$ is satisfied and $v$ takes no actions in any execution starting from $\rho'$.

Then, the construction of the macro $choose$ implies that $u$ is chosen as $v$'s parent in at most $\Delta_v$ actions of $v$. This implies the result.
\end{proof}

\begin{lemma}\label{lem:activatedorspec}
If $\rho$ is a configuration of $\mathcal{LC}$ and $v$ is a process such that $v\in E_B$, then for any execution $e$ starting from $\rho$ either
\begin{enumerate}
\item there exists a configuration $\rho'$ of $e$ such that $spec(v)$ is always satisfied after $\rho'$, or
\item $v$ is activated in $e$.
\end{enumerate}
\end{lemma}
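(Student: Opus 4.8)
The plan is to prove the contrapositive of item~(1): assuming $v$ is never activated in $e$, I exhibit a configuration $\rho'$ of $e$ after which $spec(v)$ always holds; if instead $v$ is activated, item~(2) holds trivially. Since $v$ is never activated, $prnt_v$ and $level_v$ keep their values throughout $e$, and since a process enabled infinitely often is eventually activated under a fair daemon, $v$ is enabled in only finitely many configurations of $e$; hence there is a configuration $\rho'$ of $e$ after which $v$ is permanently disabled. Because $v\in E_B$ forces $v\neq r$ (as $d(r,r)=0<\underset{b\in B}{\min}\{d(r,b)\}$ since $r\notin B$), the negation of the guard of $\boldsymbol{(R_v)}$ gives, in $\rho'$ and every subsequent configuration, $prnt_v\in N_v$, $level_v=level_{prnt_v}+1$ and $level_{prnt_v}=\underset{q\in N_v}{\min}\{level_q\}$.

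Next I would use the invariants already established. Since $\rho\in\mathcal{LC}$, Lemma~\ref{lem:SBTAcontained} says $\rho$ is $(S_B,n-1)$-TA contained, so along $e$ every configuration is $S_B$-legitimate and every process $w\in V\setminus S_B$ keeps its O-variables, satisfies $spec(w)$ with $level_w=d(r,w)$, and (following its fixed parent pointers) carries a fixed BFS path from $r$ to $w$; by Lemma~\ref{lem:Iclosed}, $I_D$ holds in every configuration of $e$. Write $\delta=d(r,v)$. From $v\in E_B$ we get $\underset{u\in B\cup\{r\}}{\min}\{d(v,u)\}=\delta$, hence $\underset{u\in B\cup\{r\}}{\min}\{d(q,u)\}\geq\delta-1$ for every $q\in N_v$, and combined with $I_D$ and $\delta\leq D$ this gives $level_q\geq\delta-1$ for all $q\in N_v$. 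On the other hand $v$ has, by construction (the same fact used in the proof of Lemma~\ref{lem:LCdegvactions}), a neighbour in $V\setminus S_B$ on a shortest path from $r$, whose level equals the constant $\delta-1$. Hence $\underset{q\in N_v}{\min}\{level_q\}=\delta-1$, so from $\rho'$ on $level_{prnt_v}=\delta-1$ and $level_v=\delta$.

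It then remains to produce a BFS path ending at $v$ valid in $\rho'$ and every later configuration: the last edge $prnt_v\to v$ already satisfies items~2 and~3 of the definition, so it suffices to show that $prnt_v$ carries a \emph{fixed} valid BFS path down to a level-$0$ node of $B\cup\{r\}$, and then concatenate. From $level_{prnt_v}=\delta-1$, $I_D$ and $v\in E_B$ one deduces $\underset{u\in B\cup\{r\}}{\min}\{d(prnt_v,u)\}=\delta-1$; the plan is to argue that $prnt_v$ is pinned to a stable state --- either because $prnt_v\in V\setminus S_B$ at distance $\delta-1$ from $r$, in which case Lemma~\ref{lem:SBTAcontained} already fixes its witnessing BFS path, or, in the remaining case, by a descending induction along the parent chain --- invoking Lemma~\ref{lem:LCdegvactions} for the ancestors lying in $E_B$ to see that they are activated only finitely often, hence eventually stationary --- showing that each ancestor's level is frozen and forces that of the next, so that the chain is stationary and forms a valid BFS path. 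I expect this last step to be the main obstacle: one has to rule out that $prnt_v$, or an ancestor of it, is a perturbed node of $S_B$ whose O-variables keep oscillating while $v$ stays disabled. The levers against this are the closure of $I_D$ (Lemma~\ref{lem:Iclosed}), which forbids any neighbour of a node whose level is frozen at $\delta-1$ from dropping below level $\delta-2$, and the equality $level_{prnt_v}=d(r,v)-1$ together with $v\in E_B$, which pin the parent (and inductively each ancestor) at exactly the distance from $r$ its level dictates; from these one argues that if any link of the chain kept changing, $v$ would be re-enabled infinitely often, contradicting the hypothesis that $v$ is never activated.
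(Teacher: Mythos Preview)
Your first two paragraphs are correct and in fact sharper than the paper: pinning $level_v=\delta$ and $level_{prnt_v}=\delta-1$ via the closure of $I_D$ together with the existence of a neighbour of $v$ in $V\setminus S_B$ is a clean observation that the paper leaves implicit.

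The gap is in your final paragraph. Your key claim --- ``if any link of the chain kept changing, $v$ would be re-enabled infinitely often'' --- is not justified, and as stated it is false. The guard of $\boldsymbol{(R_v)}$ reads only $prnt_v$, $level_v$, $level_{prnt_v}$ and $\min_{q\in N_v}\{level_q\}$; it sees nothing of $prnt_{prnt_v}$ or of anything further up the chain. So $prnt_v$ itself (if it lies in $S_B^\ast$) can be activated again and again, switching its parent among several neighbours that take turns sitting at level $\delta-2$, while $level_{prnt_v}$ stays frozen at $\delta-1$ and $v$ remains permanently disabled. Invoking Lemma~\ref{lem:LCdegvactions} only bounds activations of ancestors that lie in $E_B$; nothing rules out ancestors in $S_B^\ast$, and for those you have no finiteness bound. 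Your descending induction therefore does not close: you have no mechanism to freeze $prnt_{prnt_v}$, and hence no way to exhibit the fixed BFS path that $spec(v)$ demands.

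The paper takes a different route that sidesteps this difficulty. It argues by contradiction directly from the conjunction ``$spec(v)$ is infinitely often false'' and ``$v$ is never activated'', and it never tries to freeze the whole parent chain. Instead it observes that whenever $spec(v)$ fails, some process on the current parent chain $P_v$ is enabled; by pigeonhole and fairness some process is activated infinitely often. Passing to a suffix where only such processes act, it lets $p$ be the \emph{first} process along $P_v$ that is activated infinitely often, so that the prefix $v,\ldots,p'$ with $prnt_{p'}=p$ is eventually frozen. The contradiction is then local to the edge $(p',p)$: since $level_{p'}$ and $prnt_{p'}=p$ are fixed while $p$ keeps being activated and changing $level_p$, the process $p'$ is enabled infinitely often, hence (fairness) activated infinitely often --- contradicting the minimality of $p$. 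This argument is uniform: it makes no case distinction on whether ancestors lie in $V\setminus S_B$, $E_B$, or $S_B^\ast$, and it does not rely on oscillations propagating all the way back to $v$.
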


\begin{proof}
Let $\rho$ be a configuration of $\mathcal{LC}$ and $v$ be a process such that $v\in E_B$. By contradiction, assume that there exists an execution starting from $\rho$ such that $(i)$ $spec(v)$ is infinitely often false in $e$ and $(ii)$ $v$ is never activated in $e$.

For any configuration $\rho$, let us denote by $P_v(\rho)=(v_0=v,v_1=prnt_v,v_2=prnt_{v_1},\ldots,v_k=prnt_{v_{k-1}},p_v=prnt_{v_k})$ the maximal sequence of processes following pointers $prnt$ (maximal means here that either $prnt_{p_v}=\bot$ or $p_v$ is the first process such that there $p_v=v_i$ for some $i\in\{0,\ldots,k\}$).

Let us study the following cases:
\begin{description}
\item[Case 1:] $prnt_v\in V\setminus S_B$ in $\rho$.\\
Since $\rho\in\mathcal{LC}$, $prnt_v$ satisfies $spec(prnt_v)$ in $\rho$ and in any execution starting from $\rho$ (by Lemma \ref{lem:SBTAcontained}). If $v$ does not satisfy $spec(v)$ in $\rho$, then we have $level_v\neq level_{prnt_v}+1$ in $\rho$. Then, $v$ is continuously enabled in $e$ and we have a contradiction between assumption $(ii)$ and the fairness of the scheduling. This implies that $v$ satisfies $spec(v)$ in $\rho$. The closure of $I_D$ (established in Lemma \ref{lem:Iclosed}) ensures us that $v$ is never enabled in any execution starting from $\rho$. Hence, $spec(v)$ remains true in any execution starting from $\rho$. This contradicts the assumption $(i)$ on $e$.
\item[Case 2:] $prnt_v\notin V\setminus S_B$ in $\rho$.\\
By the assumption $(i)$ on $e$, we can deduce that there exists infinitely many configurations $\rho'$ such that a process of $P_v(\rho')$ is enabled. By construction, the length of $P_v(\rho')$ is finite for any configuration $\rho'$ and there exists only a finite number of processes in the system. Consequently, there exists at least one process which is infinitely often enabled in $e$. Since the scheduler is fair, we can conclude that there exists at least one process which is infinitely often activated in $e$.

Let $A_e$ be the set of processes which are infinitely often activated in $e$. Note that $v\notin A_e$ by assumption $(ii)$ on $e$. Let $e'=\rho'\ldots$ be the suffix of $e$ which contains only activations of processes of $A_e$. Let $p$ be the first process of $P_v(\rho')$ which belongs to $A_e$ ($p$ exists since at least one process of $P_v$ is enabled when $spec(v)$ is false). By construction, the prefix of $P_v(\rho'')$ from $v$ to $p$ in any configuration $\rho''$ of $e$ remains the same as the one of $P_v(\rho')$. Let $p'$ be the process such that $prnt_{p'}=p$ in $e'$ ($p'$ exists since $v\neq p$ implies that the prefix of $P_v(\rho')$ from $v$ to $p$ counts at least two processes). As $p$ is infinitely often activated and as any activation of $p$ modifies the value of $level_p$ (it takes at least two different values in $e'$), we can deduce that $p'$ is infinitely often enabled in $e'$ (since the value of $level_{p'}$ is constant by construction of $e'$ and $p$). Since the scheduler is fair, $p'$ is activated in a finite time in $e'$, that contradicts the construction of $p$. 
\end{description}
In the two cases, we obtain a contradiction with the construction of $e$, that proves the result.
\end{proof}

Let $\mathcal{LC^*}$ be the following set of configurations:
\[\mathcal{LC^*}=\left\{\rho\in C\left|(\rho \text{ is }S_B^*\text{-legitimate for }spec)\wedge(I_D(\rho)=true)\right.\right\}\]

Note that, as $S_B^*\subseteq S_B$, we can deduce that $\mathcal{LC^*}\subseteq\mathcal{LC}$. Hence, properties of Lemmas \ref{lem:LCdegvactions} and \ref{lem:activatedorspec} also apply to configurations of $\mathcal{LC^*}$.

\begin{lemma}\label{lem:SB*TAtimecontained}
Any configuration of $\mathcal{LC^*}$ is $(2m,\Delta,S_B^*,n-1)$-TA time contained for $spec$.
\end{lemma}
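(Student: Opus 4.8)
The plan is to verify the four conditions in the definition of a $(2m,\Delta,S_B^*,n-1)$-TA time contained configuration for a configuration $\rho\in\mathcal{LC^*}$. Condition~1 ($\rho$ is $S_B^*$-legitimate and $S_B^*$-stable) is immediate: $S_B^*$-legitimacy holds by definition of $\mathcal{LC^*}$, and $S_B^*$-stability follows from Lemma~\ref{lem:SBTAcontained} together with the inclusion $\mathcal{LC^*}\subseteq\mathcal{LC}$ and $S_B^*\subseteq S_B$, since a process that never changes its O-variables against \emph{any} behaviour of the Byzantine processes in particular never changes them while the Byzantine processes are idle. Condition~2 (eventual permanent stabilization of O-variables of $S_B^*$-correct processes) will be obtained by combining Lemma~\ref{lem:SBTAcontained} --- which already handles the $S_B$-correct processes forever --- with a separate argument for the ``extra'' processes $E_B=S_B\setminus S_B^*$: by Lemma~\ref{lem:LCdegvactions} each $v\in E_B$ is activated at most $\Delta_v$ times in any execution from $\rho$, hence changes its O-variables only finitely often, and by Lemma~\ref{lem:activatedorspec} once it stops being activated it satisfies $spec(v)$; so after finitely many steps every $S_B^*$-correct process (those in $V\setminus S_B$, plus those in $E_B$) has frozen O-variables and the resulting configuration is $S_B^*$-legitimate.

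For Condition~4 (the per-process bound $k=\Delta$): an $S_B^*$-correct process is either in $V\setminus S_B$, and then by Lemma~\ref{lem:SBTAcontained} it never changes its O-variables (so $0\le\Delta$ changes), or it is in $E_B$, and then by Lemma~\ref{lem:LCdegvactions} it is activated at most $\Delta_v\le\Delta$ times, hence changes its O-variables at most $\Delta$ times. For Condition~3 (at most $t=2m$ $S_B^*$-TA-disruptions): the key observation is that an $S_B^*$-TA-disruption, by definition, must contain at least one action of an $S_B^*$-correct process changing an O-variable, and every such action is an activation of some process $v\in E_B$ (processes in $V\setminus S_B$ never move their O-variables). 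Each $v\in E_B$ contributes at most $\Delta_v$ such activations over the whole execution, so the total number of O-variable changes by $S_B^*$-correct processes across the entire execution is at most $\sum_{v\in E_B}\Delta_v\le\sum_{v\in V}\Delta_v=2|L|=2m$. Since each disruption consumes at least one such change, the number of disruptions is at most $2m=t$.

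The step I expect to require the most care is Condition~2 --- specifically, arguing that the execution actually \emph{reaches} a configuration that is simultaneously $S_B^*$-legitimate and after which nothing in $V\setminus S_B^*$ moves, even while the Byzantine processes keep acting forever. The subtlety is that the two halves of $V\setminus S_B^*$ behave for different reasons (processes in $V\setminus S_B$ are protected by the closure of $I_D$ and the round-robin-free argument of Lemma~\ref{lem:SBTAcontained}, while processes in $E_B$ are protected only by the quantitative bound $\Delta_v$ on their activations coming from the $choose$ macro), and one must check that once an $E_B$ process has exhausted its activation budget it is genuinely disabled forever: this is exactly the content of the argument inside Lemma~\ref{lem:LCdegvactions}, namely that $choose$ eventually selects the neighbour $u\in V\setminus S_B$ with $level_u=\min_{q\in N_v}\{level_q\}$, after which $v$'s guard is false permanently. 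I would spell out that after all $E_B$ processes have stopped, the configuration reached is $S_B^*$-legitimate (each $v\in E_B$ then satisfies $spec(v)$ by Lemma~\ref{lem:activatedorspec}, and each $v\in V\setminus S_B$ satisfies $spec(v)$ by Lemma~\ref{lem:SBTAcontained}), which closes the argument and also yields the stabilization-time bound $l=\Delta$ needed for the subsequent theorem.
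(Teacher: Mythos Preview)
Your proposal is correct and follows the paper's approach: both arguments use Lemma~\ref{lem:SBTAcontained} to freeze the processes in $V\setminus S_B$, then Lemmas~\ref{lem:LCdegvactions} and~\ref{lem:activatedorspec} to bound the number of activations of each $v\in E_B$ by $\Delta_v$ and to guarantee $spec(v)$ after the last activation, and finally sum $\Delta_v$ over $E_B$ (bounded by $\sum_{v\in V}\Delta_v=2m$) to cap the number of $S_B^*$-TA-disruptions. Your write-up is in fact more explicit than the paper's, which after establishing those two facts simply concludes ``by definition of a TA time contained configuration, we obtain the result''.

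One slip to fix in Condition~1: the inclusion $S_B^*\subseteq S_B$ makes the class of $S_B^*$-correct processes \emph{larger} than the class of $S_B$-correct processes, not smaller. Hence Lemma~\ref{lem:SBTAcontained} (which only freezes $S_B$-correct processes) does not by itself yield $S_B^*$-stability; it says nothing about the processes in $E_B$ while the Byzantine processes are idle. The claim is still true --- in a configuration of $\mathcal{LC}^*$ every $v\in E_B$ satisfies $spec(v)$ and is therefore disabled, and the BFS-path requirement together with closure of $I_D$ keeps all processes along $v$'s path disabled as long as Byzantine processes make no action --- but the justification you wrote does not establish it. The paper's own proof glosses over this point as well, so this is a matter of precision rather than a structural gap in your argument.
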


\begin{proof}
Let $\rho$ be a configuration of $\mathcal{LC^*}$. As $S_B^*\subseteq S_B$, we know by Lemma \ref{lem:SBTAcontained} that any process $v$ of $V\setminus S_B$ satisfies $spec(v)$ and takes no action in any execution starting from $\rho$.

Let $v$ be a process of $E_B$. By Lemmas \ref{lem:LCdegvactions} and \ref{lem:activatedorspec}, we know that $v$ takes at most $\Delta_v$ actions in any execution starting from $\rho$. Moreover, we know that $v$ satisfies $spec(v)$ after its last action (otherwise, we obtain a contradiction between the two lemmas). Hence, any process of $E_B$ takes at most $\Delta_v\leq \Delta$ actions and then, there are at most $\underset{v\in V}{\sum}\Delta_v=2m$ $S_B^*$-TA-disruptions in any execution starting from $\rho$.

By definition of a TA time contained configuration, we obtain the result.
\end{proof}

\begin{lemma}\label{lem:convergenceLC*}
Starting from any configuration, any execution of $\mathcal{SSBFS}$ reaches a configuration of $\mathcal{LC^*}$ in a finite time under a distributed fair scheduler.
\end{lemma}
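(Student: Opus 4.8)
The plan is to bootstrap on Lemma~\ref{lem:convergenceLC}: it already gives convergence to the larger set $\mathcal{LC}$, so it only remains to drive the finitely many processes of $E_B=S_B\setminus S_B^*$ into a state satisfying $spec$, and for this I would invoke the two lemmas (\ref{lem:LCdegvactions} and~\ref{lem:activatedorspec}) that were established precisely for processes of $E_B$ starting from $\mathcal{LC}$.

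First I would record that $\mathcal{LC}$ is closed under the steps of $\mathcal{SSBFS}$. Indeed, $I_D$ is closed by Lemma~\ref{lem:Iclosed}, and by Lemma~\ref{lem:SBTAcontained} every configuration of $\mathcal{LC}$ is $(S_B,n-1)$-TA contained, so along any execution issued from $\mathcal{LC}$ every configuration stays $S_B$-legitimate; hence every configuration reachable from $\mathcal{LC}$ again lies in $\mathcal{LC}$. Now take an arbitrary fair execution $e=\rho_0,\rho_1,\ldots$. By Lemma~\ref{lem:convergenceLC} there is a finite index $j$ with $\rho_j\in\mathcal{LC}$, and the suffix of $e$ from $\rho_j$ is again a fair execution starting from a configuration of $\mathcal{LC}$.

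Next I would stabilize $E_B$. By Lemma~\ref{lem:LCdegvactions}, each $v\in E_B$ is activated at most $\Delta_v$ times in the suffix from $\rho_j$; summing over the finite set $E_B$, only finitely many activations of $E_B$-processes occur after $\rho_j$, so there is a finite index $j'\ge j$ such that no process of $E_B$ is activated after $\rho_{j'}$ (note $\rho_{j'}\in\mathcal{LC}$ by closure). Applying Lemma~\ref{lem:activatedorspec} to the suffix from $\rho_{j'}$ and to each $v\in E_B$ separately, the alternative ``$v$ is activated'' is impossible, so for each such $v$ there is a configuration of that suffix after which $spec(v)$ holds forever; taking the largest of these finitely many indices yields a finite index $j''\ge j'$ such that, from $\rho_{j''}$ on, every process of $E_B$ satisfies $spec$. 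By Lemma~\ref{lem:SBTAcontained} every process of $V\setminus S_B$---which includes the root, since $r$ is correct and $r\notin S_B$---already satisfies $spec$ in $\rho_{j''}$, and since $B\subseteq S_B^*$ (the root is never Byzantine) the set of $S_B^*$-correct processes is exactly $(V\setminus S_B)\cup E_B$. Therefore $\rho_{j''}$ is $S_B^*$-legitimate for $spec$; as $I_D$ has been preserved throughout, $I_D(\rho_{j''})=true$, so $\rho_{j''}\in\mathcal{LC^*}$ and the lemma follows.

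The only routine facts needed are the closure of $\mathcal{LC}$ and the identity $V\setminus S_B^*=(V\setminus S_B)\cup E_B$. The delicate step is the combination of Lemmas~\ref{lem:LCdegvactions} and~\ref{lem:activatedorspec}: one has to notice that ``activated only finitely often'' forces, on an appropriately chosen suffix, the ``eventually permanently $spec$'' branch of the dichotomy simultaneously for all processes of $E_B$, and that this common stabilization point is reached after only finitely many steps of $e$.
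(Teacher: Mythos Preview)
Your proposal is correct and follows essentially the same route as the paper: reach $\mathcal{LC}$ via Lemma~\ref{lem:convergenceLC}, then combine Lemmas~\ref{lem:LCdegvactions} and~\ref{lem:activatedorspec} to force every process of $E_B$ to eventually satisfy $spec$, and conclude that the resulting configuration lies in $\mathcal{LC^*}$. The paper's proof is terser---it does not spell out the closure of $\mathcal{LC}$ nor the identity $V\setminus S_B^*=(V\setminus S_B)\cup E_B$---but the logical content is the same, and your extra care in passing to a suffix where no $E_B$-process is activated before invoking Lemma~\ref{lem:activatedorspec} only makes the argument cleaner.
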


\begin{proof}
Let $\rho$ be an arbitrary configuration. We know by Lemma \ref{lem:convergenceLC} that any execution starting from $\rho$ reaches in a finite time a configuration $\rho'$ of $\mathcal{LC}$. 

Let $v$ be a process of $E_B$. By Lemmas \ref{lem:LCdegvactions} and \ref{lem:activatedorspec}, we know that $v$ takes at most $\Delta_v$ actions in any execution starting from $\rho'$. Moreover, we know that $v$ satisfies $spec(v)$ after its last action (otherwise, we obtain a contradiction between the two lemmas). This implies that any execution starting from $\rho'$ reaches a configuration $\rho''$ such that any process $v$ of $E_B$ satisfies $spec(v)$. It is easy to see that $\rho''\in\mathcal{LC^*}$, that ends the proof.
\end{proof}

\begin{theorem}\label{th:possTAStrong}
$\mathcal{SSBFS}$ is a $(2m,S_B^*,n-1)$-TA strongly stabilizing protocol for $spec$.
\end{theorem}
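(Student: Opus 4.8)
The plan is to obtain Theorem~\ref{th:possTAStrong} as a corollary of the two preceding lemmas, in the same way Theorem~\ref{th:possTAStricte} follows from Lemmas~\ref{lem:SBTAcontained} and~\ref{lem:convergenceLC}, with $\mathcal{LC^*}$ playing the role of the set of witnessing configurations. Spelling out the definition of a $(t,S_B^*,f)$-TA strongly stabilizing protocol with the parameters claimed in the statement ($f=n-1$, $t=2m$, the root being correct by the standing hypothesis of the section), what has to be shown is: starting from an arbitrary configuration, every execution with at most $n-1$ Byzantine processes reaches, after a bounded number $l$ of actions of each $S_B^*$-correct node, a configuration that is $(2m,k,S_B^*,n-1)$-TA time contained for some finite $k$; and that $S_B^*$-legitimacy is preserved along executions issued from such configurations.

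First I would invoke Lemma~\ref{lem:convergenceLC*}: under the distributed fair daemon, every execution of $\mathcal{SSBFS}$ reaches a configuration $\sigma\in\mathcal{LC^*}$ in finite time, hence after finitely many actions of each process; unwinding the layered convergence argument of Lemma~\ref{lem:convergenceLC} together with the at most $\Delta_v$ additional actions of each $E_B$-process supplied by Lemmas~\ref{lem:LCdegvactions} and~\ref{lem:activatedorspec} yields the bound $l$ on the number of actions of each $S_B^*$-correct node, so I would simply record that such an $l$ exists. Then I would invoke Lemma~\ref{lem:SB*TAtimecontained}: every configuration of $\mathcal{LC^*}$ is $(2m,\Delta,S_B^*,n-1)$-TA time contained, which fixes $t=2m$ and gives $k=\Delta$ as the $(2m,S_B^*,n-1)$-process-disruption time. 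Together, these two lemmas deliver precisely the existence of a $(2m,k,S_B^*,n-1)$-TA time contained configuration reached within $l$ actions of each $S_B^*$-correct node.

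It then remains to discuss preservation of $S_B^*$-legitimacy. The relevant content is property~2 of a TA time contained configuration: along every execution from $\sigma$ the O-variables of all $S_B^*$-correct processes are eventually frozen while remaining in an $S_B^*$-legitimate state. For a process $v\in V\setminus S_B$ this is already Lemma~\ref{lem:SBTAcontained} (using closure of $I_D$, Lemma~\ref{lem:Iclosed}, to forbid a neighbour from dropping its level below $level_v-1$). For $v\in E_B$, a Byzantine neighbour can still momentarily falsify $spec(v)$, but by Lemmas~\ref{lem:LCdegvactions} and~\ref{lem:activatedorspec} each such perturbation costs $v$ at least one of its at most $\Delta_v$ activations, after the last of which the round-robin \emph{choose} macro has made $prnt_v$ point to a stabilised $V\setminus S_B$ neighbour of minimum level, so that $v$ is permanently disabled and satisfies $spec(v)$ forever; these at most $\Delta_v\le\Delta$ activations per process are what the bound $2m=\sum_{v\in V}\Delta_v$ on the number of $S_B^*$-TA-disruptions counts.

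The main obstacle is not located in this assembly step, which is bookkeeping once the lemmas are in place, but upstream: it is the necessity of modifying the bare $min+1$ rule so that a process cycles through its minimum-level neighbours in round-robin order, which is exactly what makes the number of activations of an $E_B$-process finite (bounded by its degree) rather than potentially unbounded under an adversarial Byzantine neighbour, and hence what turns the merely self-stabilizing $min+1$ protocol into a $(2m,S_B^*,n-1)$-TA strongly stabilizing one. Given Lemmas~\ref{lem:SB*TAtimecontained} and~\ref{lem:convergenceLC*}, the theorem follows by combining them as above.
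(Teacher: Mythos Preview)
Your proposal is correct and follows exactly the same approach as the paper: the paper's proof is the single sentence ``This result is a direct consequence of Lemmas~\ref{lem:SB*TAtimecontained} and~\ref{lem:convergenceLC*}.'' You have simply unpacked what that sentence means against the definition of $(t,S_B,f)$-TA strong stabilization, including the closure clause and the identification of the parameters $k=\Delta$ and $t=2m$, which the paper leaves implicit.
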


\begin{proof}
This result is a direct consequence of Lemmas \ref{lem:SB*TAtimecontained} and \ref{lem:convergenceLC*}.
\end{proof}

\subsection{Optimality of containment areas of the $min+1$ protocol}

\begin{theorem}\label{th:impTAStricte}
Even under the central daemon, there exists no $(A_B,1)$-TA strictly stabilizing protocol for BFS spanning tree construction where $A_B\varsubsetneq S_B$.
\end{theorem}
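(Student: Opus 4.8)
The plan is to argue by contradiction and exploit the fact that, to every process other than itself, a Byzantine process is indistinguishable from a correct process whose state happens to be consistent with a \emph{different} network; by alternating which ``different network'' it mimics, the Byzantine process can drag a node of $S_B$ back and forth forever.

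Concretely, suppose a protocol $\mathcal{P}$ is $(A_B,1)$-TA strictly stabilizing with $A_B\varsubsetneq S_B$. Fix a system and a single Byzantine process $b$ for which the inclusion is strict, and pick a victim $v\in S_B\setminus A_B$; then $v\neq r$, $v$ is correct, and $d(v,b)\le d(v,r)$. Since $v\notin A_B$, the definition of $(A_B,1)$-TA strict stabilization forces every execution to reach a configuration after which $v$ satisfies $spec(v)$ forever and never again changes $prnt_v$ or $level_v$. I will contradict this by building a single execution, schedulable by a central daemon and involving only $b$, in which $v$ changes an O-variable infinitely often.

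The main tool is the following. For a fixed integer $j\ge 1$, if $b$ permanently holds the constant state $level_b=j$ together with some fixed pointer value, then for every other process the resulting execution of $\mathcal{P}$ is identical, step by step, to a fault-free execution of $\mathcal{P}$ on the network $G_j$ obtained by attaching to $b$ a fresh simple path of length $j$ whose other endpoint is $r$, the path being chosen so that it does not touch the rest of the graph, so every original process keeps the same neighbourhood. Since $\mathcal{P}$ is self-stabilizing, on $G_j$ it converges to a configuration satisfying $spec$ everywhere, in which $level_v=\min\{d(v,r),\,j+d(v,b)\}$; by the indistinguishability, the same holds in the original system while $b$ behaves this way. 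Now use $d(v,b)\le d(v,r)$: when $d(v,b)<d(v,r)$ we may take $j_1$ small enough that $j_1+d(v,b)<d(v,r)$, so that $level_v$ stabilizes to a value strictly below $d(v,r)$ and $prnt_v$ points on the $b$-side, and $j_2$ large enough that $level_v$ stabilizes to $d(v,r)$ with $prnt_v$ on the $r$-side; letting $b$ alternate forever between these two behaviours, waiting each time until $\mathcal{P}$ has converged, forces $v$ to change $level_v$ infinitely often, contradicting the conclusion of the previous paragraph.

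The case $d(v,b)=d(v,r)$ — i.e.\ $v\in E_B$ — and, more generally, the near-boundary situations where the two behaviours above produce the same level for $v$, are the hard part. There the perturbation of $v$ can only be seen on $prnt_v$, and it is no longer enough to compare with fault-free runs; instead one has to exploit that $\mathcal{P}$ cannot locate $b$, and therefore cannot ``shield'' $v$ by sacrificing the appropriate neighbour of $v$ rather than $v$ itself. I would handle this by an indistinguishability argument between two runs in which $b$ mimics the root along two symmetric paths, in the spirit of the proof of Theorem~\ref{th:impStrong}: in one run $\mathcal{P}$ must attach $v$ to a neighbour on one side, in the mirror run to a neighbour on the other, and since $\mathcal{P}$ cannot distinguish the two runs it can be trapped into flipping $prnt_v$ between the two choices infinitely often. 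Getting this argument fully rigorous — in particular identifying, inside the arbitrary system given to us, a subconfiguration around $v$ on which this symmetry can be enforced with a \emph{single} Byzantine process, and ruling out that $\mathcal{P}$ keeps $v$ legitimate by stabilizing some neighbour of $v$ on an ``incorrect but stable'' value — is the step I expect to be the real obstacle. Everything else (legality of the constant state imposed on $b$ in $G_j$, the grafts not disturbing $v$'s local view, and the fact that the whole schedule can be produced by a central daemon and uses only $f=1$) is routine.
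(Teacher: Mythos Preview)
The paper's own proof is a one-line citation to Theorem~2 of~\cite{DMT10ra}, so there is no detailed argument here to compare against; I can only assess your attempt on its own terms.

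Your grafting argument for the ``easy'' case $d(v,b)<d(v,r)$ already has a real gap. You attach a fresh path of length $j$ between $b$ and $r$ and claim that ``every original process keeps the same neighbourhood''; this fails at the two endpoints. At $b$ the failure is harmless since $b$ is Byzantine and can replay whatever state sequence you like, but $r$ is correct: an arbitrary protocol $\mathcal{P}$ may have $r$'s guards, actions, or auxiliary variables depend on the extra neighbour that is present in $G_j$ but absent in $G$, and this discrepancy then propagates from $r$ outward, destroying the step-by-step identification of the two executions. So you cannot conclude that $level_v$ stabilises to $\min\{d(v,r),\,j+d(v,b)\}$ for a general $\mathcal{P}$. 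Your Case~2 you already flag as incomplete, and rightly so: the symmetry you want there needs a symmetric system, which you do not have because you fixed an arbitrary one at the outset. Both difficulties come from the same over-reach---you are trying to prove optimality of $S_B$ on every system at once. The simpler and standard route, which is exactly what this paper does for the companion Theorem~\ref{th:impTAStrong}, is to exhibit one concrete symmetric instance (a chain, or the $6$-cycle with $r$ and $b$ antipodal) in which $b$ can mimic $r$ exactly by copying its actions; there the indistinguishability is genuine and no grafting is needed.
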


\begin{proof}
This is a direct application of the Theorem 2 of \cite{DMT10ra}.
\end{proof}

\begin{theorem}\label{th:impTAStrong}
Even under the central daemon, there exists no $(t,A_B,1)$-TA strongly stabilizing protocol for BFS spanning tree construction where $A_B\varsubsetneq S_B$ and $t$ is any (finite) integer.
\end{theorem}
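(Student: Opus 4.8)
The plan is to show that no protocol can do strictly better than the containment area $S_B$, even in the weaker topology-aware strong stabilization sense. The natural strategy mirrors the impossibility proof of Theorem~\ref{th:impStrong}, but now we must exhibit a process $w$ that lies in $S_B\setminus A_B$ (which is nonempty since $A_B\varsubsetneq S_B$) and construct an infinite execution in which $w$ is forced to change its O-variables infinitely often, contradicting the bounded number of $A_B$-TA-disruptions. By definition of $S_B$, such a process $w$ satisfies $\min_{b\in B}d(w,b)\leq d(r,w)$, so there is a Byzantine process $b$ at distance at most $d(r,w)$ from $w$; fix a shortest path $r=p_0,p_1,\ldots,p_\ell=w$ from the root to $w$ (so $\ell=d(r,w)$) and a shortest path $w=q_0,q_1,\ldots,q_{\ell'}=b$ from $w$ to $b$ with $\ell'\leq \ell$.

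First I would set up, exactly as in Theorem~\ref{th:impStrong}, an initial configuration in which $b$ carries the same state as the real root would ($level_b=0$, $prnt_b=\bot$, and other variables mimicking $r$), and let $b$ replay the root's actions. By closure of legitimacy outside $A_B$ and by convergence of the hypothetical protocol, the system reaches an $A_B$-legitimate and $A_B$-stable configuration $\rho_1$ in which $w$ has adopted a ``level'' consistent with the closer of its two sources: since $d(w,b)\leq d(w,r)$, the value forced at $w$ is governed by the branch toward $b$, giving $level_w=d(w,b)$ and $prnt_w$ pointing along the $q$-path — in any case a state reflecting $b$ rather than $r$. Next I would let $b$ behave as a genuine correct process running the protocol; by self-stabilization in the fault-free system the configuration converges to the unique legitimate global state in which every process, $w$ included, has $level$ equal to its true distance to $r$, so $w$ is forced to change its O-variables at least once (here is where $A_B\varsubsetneq S_B$ is essential: $w\notin A_B$, hence this change counts as an $A_B$-TA-disruption, and $w$'s action count increases). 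Finally I would have $b$ reset to $level_b=0,\,prnt_b=\bot$, returning to a configuration from which the whole construction repeats verbatim, yielding infinitely many $A_B$-TA-disruptions and infinitely many O-variable changes of the $A_B$-correct process $w$, contradicting $(t,A_B,1)$-TA strong stabilization for any finite $t$.

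The main obstacle is making precise that the two ``stable'' configurations reached in the loop are genuinely distinct on $w$'s O-variables and that $w$ is genuinely $A_B$-correct while not being shielded; this requires care in choosing the witness $w$ and the paths so that the symmetric execution argument (as in Theorem~\ref{th:impStrong}) actually applies — in particular one should pick $w$ to be a process of $S_B\setminus A_B$ closest to $B$, take $b$ to be the nearest Byzantine process, and argue that along the shortest $r$–$w$ path no process of $A_B$ is crossed, so that the states of those processes are still constrained by the protocol and force $w$'s value as claimed. A secondary subtlety, already handled in Theorem~\ref{th:impStrong}'s proof, is verifying that the Byzantine imitation steps do not violate closure of legitimacy for the processes genuinely outside $A_B$; this is immediate since those processes are at distance more than their index in $A_B$'s defining radius from $b$ and hence unaffected in one step. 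Modulo these points, the argument is a direct adaptation, and as the authors note, it can alternatively be cited from the corresponding impossibility result for TA strong stabilization in \cite{DMT10rb}.
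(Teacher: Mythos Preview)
Your proposal has a genuine gap, and it stems in part from a typo in the theorem statement itself. The statement as printed says $A_B\varsubsetneq S_B$, but the paper's own proof opens with ``where $A_B\varsubsetneq S_B^*$'' and the summary table likewise records the impossibility only for $A_B\varsubsetneq S_B^*$. This is not cosmetic: with $S_B$ in place of $S_B^*$ the claim is false, because Theorem~\ref{th:possTAStrong} exhibits a protocol that is $(2m,S_B^*,n-1)$-TA strongly stabilizing, and one can have $S_B^*\varsubsetneq S_B$. Your argument inherits this: if the witness $w$ you pick in $S_B\setminus A_B$ happens to satisfy $d(w,b)=d(w,r)$ (i.e.\ $w\in S_B\setminus S_B^*$), then in your ``phase 1'' the specification does \emph{not} force $level_w=d(w,b)$ with $prnt_w$ toward $b$; a BFS path from $w$ to $r$ with $level_w=d(w,r)$ is equally legitimate, and then nothing changes between $\rho_1$ and $\rho_2$. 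Your ``choose $w$ closest to $B$'' patch does not rule this out.

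Even after correcting to $A_B\varsubsetneq S_B^*$, your general-graph strategy is substantially looser than what the paper does. The paper does not argue in an arbitrary system at all: it fixes one concrete six-node graph (a $6$-cycle $r,u,v,b,v',u'$) in which $S_B^*=\{v,v'\}$, and uses the exact symmetry between $r$ and $b$ to pin down a \emph{unique} legitimate configuration in each phase, so that $v$ and $v'$ provably change O-variables between $\rho_1$ and $\rho_2$. In your approach the step ``by symmetry \dots\ the system reaches $\rho_1$ in which $level_w=d(w,b)$'' is not justified: in a general graph with $b$ mimicking $r$ there can be many $A_B$-legitimate configurations, processes on the $r$--$w$ path may lie in $A_B$, and nothing forces $w$'s parent onto the $q$-path. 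The clean fix is to do what the paper does: exhibit a single small symmetric instance witnessing the failure, rather than trying to argue for an arbitrary $w$.
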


\begin{proof}
Let $\mathcal{P}$ be a $(t,A_B,1)$-TA strongly stabilizing protocol for BFS spanning tree construction protocol where $A_B\varsubsetneq S_B^*$ and $t$ is a finite integer. We must distinguish the following cases:

Consider the following system: $V=\{r,u,u',v,v',b\}$ and $E=\{\{r,u\},\{r,u'\},$ $\{u,v\},\{u',v'\},$ $\{v,b\},\{v',b\}\}$ ($b$ is a Byzantine process). We can see that $S_B^*=\{v,v'\}$. Since $A_B\varsubsetneq S_B$, we have: $v\notin A_B$ or $v'\notin A_B$. Consider now the following configuration $\rho_0$: $prnt_r=prnt_b=\bot$, $level_r=level_b=0$, $prnt$ and $level$ variables of other processes are arbitrary (see Figure \ref{fig:impTAstrong}, other variables may have arbitrary values but other variables of $b$ are identical to those of $r$).

Assume now that $b$ takes exactly the same actions as $r$ (if any) immediately after $r$. Then, by symmetry of the execution and by convergence of $\mathcal{P}$ to $spec$, we can deduce that the system reaches in a finite time a configuration $\rho_1$ (see Figure \ref{fig:impTAstrong}) in which: $prnt_r=prnt_b=\bot$, $prnt_u=prnt_{u'}=r$, $prnt_v=prnt_{v'}=b$, $level_r=level_b=0$ and $level_u=level_{u'}=level_v=level_{v'}=1$ (because this configuration is the only one in which all correct process $v$ satisfies $spec(v)$ when $prnt_r=prnt_b=\bot$ and $level_r=level_b=0$). Note that $\rho_1$ is $A_B$-legitimate for $spec$ and $A_B$-stable (whatever $A_B$ is).

Assume now that $b$ behaves as a correct process with respect to $\mathcal{P}$. Then, by convergence of $\mathcal{P}$ in a fault-free system starting from $\rho_1$ which is not legitimate (remember that a strongly-stabilizing protocol is a special case of self-stabilizing protocol), we can deduce that the system reaches in a finite time a configuration $\rho_2$ (see Figure \ref{fig:impTAstrong}) in which: $prnt_r=\bot$, $prnt_u=prnt_{u'}=r$, $prnt_v=u$, $prnt_{v'}=u'$, $prnt_b=v$ (or $prnt_b=v'$), $level_r=0$, $level_u=level_{u'}=1$ $level_v=level_{v'}=2$ and $level_b=3$. Note that processes $v$ and $v'$ modify their O-variables in the portion of execution between $\rho_1$ and $\rho_2$ and that $\rho_2$ is $A_B$-legitimate for $spec$ and $A_B$-stable (whatever $A_B$ is). Consequently, this portion of execution contains at least one $A_B$-TA-disruption (whatever $A_B$ is).

Assume now that the Byzantine process $b$ takes the following state: $prnt_b=\bot$ and $level_b=0$. This step brings the system into configuration $\rho_3$ (see Figure \ref{fig:impTAstrong}). From this configuration, we can repeat the execution we constructed from $\rho_0$. By the same token, we obtain an execution of $\mathcal{P}$ which contains $c$-legitimate and $c$-stable configurations (see $\rho_1$) and an infinite number of $A_B$-TA-disruption (whatever $A_B$ is) which contradicts the $(t,A_B,1)$-TA strong stabilization of $\mathcal{P}$.
\end{proof}

\begin{figure}[t]
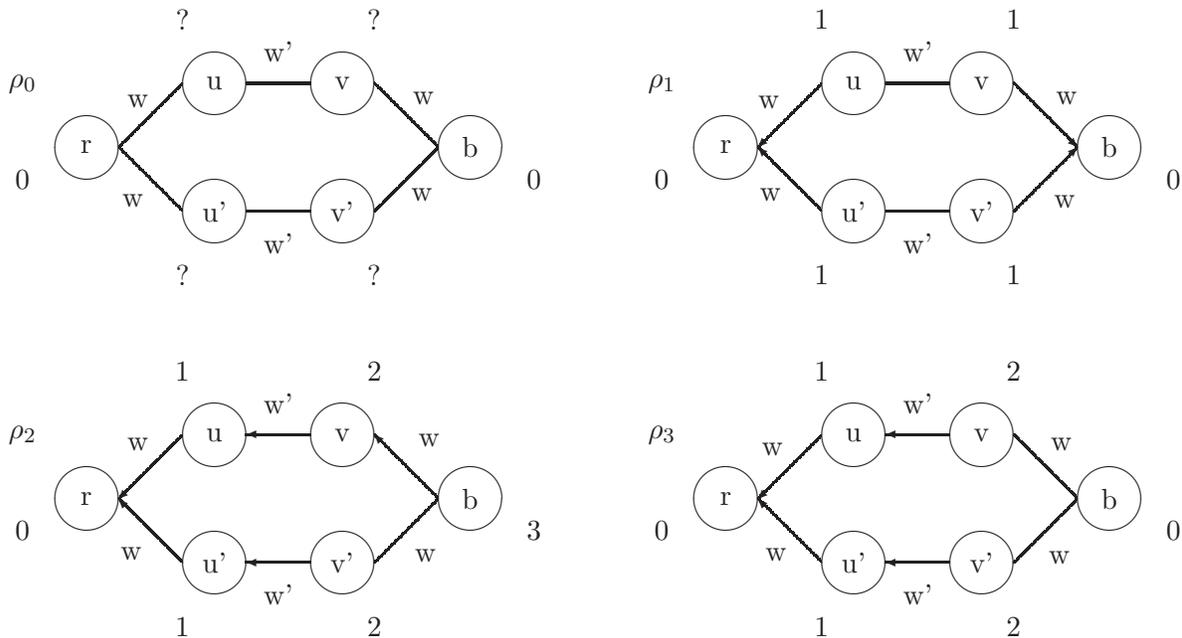

\noindent \begin{centering} \include{impTAstrong}
  \par\end{centering}
 \caption{Configurations used in proof of Theorem \ref{th:impTAStrong}.}
\label{fig:impTAstrong}
\end{figure}

\section{Conclusion}

In this article, we are interested in the BFS spanning tree construction in presence of both systemic transient faults and permanent Byzantine failures. As this task is global, it is impossible to solve it in a strictly stabilizing way. We proved then that there exists no solution to this problem even if we consider the weaker notion of strong stabilization.

Then, we provide a study of Byzantine containment properties of the well-known $min+1$ protocol. This protocol is one of the simplest self-stabilizing protocols which solve this problem. However, we prove that it achieves optimal area containment with respect to the notion of topology-aware strict and strong stabilization. All our results are summarized in the above table.

\footnotesize
\begin{center}
\begin{tabular}{|c||c|}
\cline{2-2}
\multicolumn{1}{c||}{}& BFS spanning tree construction \tabularnewline
\hline
\hline
$(c,f)$-strict stabilization & Impossible\tabularnewline
(for any $c$ and $f$)& (Theorem \ref{th:impStricte})\tabularnewline
\hline
$(t,c,f)$-strong stabilization  & Impossible\tabularnewline
(for any $t$, $c$, $f$)&(Theorem \ref{th:impStrong})\tabularnewline
\hline
$(A_B,f)$-TA strict stabilization  & Impossible\tabularnewline 
(for any $f$ and $A_{B}\varsubsetneq S_B$)& (Theorem \ref{th:impTAStricte})\tabularnewline
\hline
$(S_B,f)$-TA strict stabilization & Possible\tabularnewline 
(for $0\leq f\leq n-1$)& (Theorem \ref{th:possTAStricte})\tabularnewline
\hline
$(t,A_B,f)$-TA strong stabilization & Impossible\tabularnewline 
(for any $f$, $t$ and $A_{B}\varsubsetneq S_B^*$)& (Theorem \ref{th:impTAStrong})\tabularnewline
\hline
$(t,S_B^*,f)$-TA strong stabilization & Possible\tabularnewline 
(for $0\leq f\leq n-1$) & (Theorem \ref{th:possTAStrong}, $t=2m$)\tabularnewline
\hline
\end{tabular}
\end{center}

\normalsize

Using the result of \cite{DT01jb} about $r$-operators, we can easily extend results of this paper to some others problems as depth-search or reliability spanning trees. This work raises the following open questions. Has any other global static task as leader election or maximal matching a topology-aware strictly or/and strongly stabilizing solution ? We can also wonder about non static tasks as mutual exclusion (recall that local mutual exclusion has a strictly stabilizing solution provided by \cite{NA02c}).

\bibliographystyle{plain}
\bibliography{biblio}

\end{document}